\newtheorem{theorem}{Theorem}
\newtheorem{proposition}[theorem]{Proposition}
\newtheorem{lemma}[theorem]{Lemma}
\newtheorem{definition}[theorem]{Definition}
\newtheorem{corollary}[theorem]{Corollary}
\newtheorem{example}[theorem]{Example}
\begin{document}
\title{Channel Polarization on $q$-ary Discrete Memoryless Channels by Arbitrary Kernels}


\author{
\IEEEauthorblockN{Ryuhei Mori}
\IEEEauthorblockA{Graduate School of Informatics\\
Kyoto University \\
Kyoto, 606--8501, Japan\\
Email: rmori@sys.i.kyoto-u.ac.jp}
\and
\IEEEauthorblockN{Toshiyuki Tanaka}
\IEEEauthorblockA{Graduate School of Informatics\\
Kyoto University\\
Kyoto, 606--8501, Japan\\
Email: tt@i.kyoto-u.ac.jp}
}

\maketitle

\begin{abstract}
A method of channel polarization, proposed by Ar{\i}kan, allows us to construct efficient capacity-achieving channel codes.
In the original work, binary input discrete memoryless channels are considered.
A special case of $q$-ary channel polarization is considered by {\c S}a{\c s}o{\u g}lu, Telatar, and Ar{\i}kan.
In this paper, we consider more general channel polarization on $q$-ary channels.
We further show explicit constructions using Reed-Solomon codes, on which asymptotically fast channel polarization is induced.
\end{abstract}


\IEEEpeerreviewmaketitle

\section{Introduction}
Channel polarization, proposed by Ar{\i}kan, is a method of constructing capacity achieving codes
with low encoding and decoding complexities~\cite{5075875}.
Channel polarization can also be used to construct lossy source codes which achieve rate-distortion trade-off
with low encoding and decoding complexities~\cite{korada2009pco}.
Ar{\i}kan and Telatar derived the rate of channel polarization~\cite{5205856}.
In~\cite{tanaka2010rre}, a more detailed rate of channel polarization which includes coding rate is derived.
In~\cite{5075875}, channel polarization is based on a $2\times 2$ matrix.
Korada, {\c S}a{\c s}o{\u g}lu, and Urbanke considered generalized polarization phenomenon which is based on an $\ell\times\ell$ matrix
and derived the rate of the generalized channel polarization~\cite{korada2009pcc}.
In~\cite{sasoglu2009pad}, a special case of channel polarization on $q$-ary channels is considered.
In this paper, we consider channel polarization on $q$-ary channels which is based on arbitrary mappings.

\section{Preliminaries}
Let $u_0^{\ell-1}$ and $u_i^j$ denote a row vector $(u_0,\dotsc,u_{\ell-1})$ and its subvector $(u_i,\dotsc,u_j)$.
Let $\mathcal{F}^c$ denote the complement of a set $\mathcal{F}$,
and $|\mathcal{F}|$ denotes cardinality of $\mathcal{F}$.
Let $\mathcal{X}$ and $\mathcal{Y}$ be an input alphabet and an output alphabet, respectively.
In this paper, we assume that $\mathcal{X}$ is finite and that $\mathcal{Y}$ is at most countable.
A discrete memoryless channel (DMC) $W$ is defined as a conditional probability distribution $W(y\mid x)$
over $\mathcal{Y}$ where $x\in\mathcal{X}$ and $y\in\mathcal{Y}$.
We write $W:\mathcal{X}\to\mathcal{Y}$ to mean a DMC $W$ with an input alphabet $\mathcal{X}$ and
an output alphabet $\mathcal{Y}$.
Let $q$ be the cardinality of $\mathcal{X}$.
In this paper, the base of the logarithm is $q$ unless otherwise stated.
\begin{definition}
The symmetric capacity of $q$-ary input channel $W:\mathcal{X}\to\mathcal{Y}$ is defined as
\begin{equation*}
I(W) := \sum_{x\in\mathcal{X}}\sum_{y\in\mathcal{Y}} \frac1q W(y\mid x)
\log\frac{W(y\mid x)}{\frac1q \sum_{x'\in\mathcal{X}}W(y\mid x')}.
\end{equation*}
Note that $I(W)\in[0,1]$.
\end{definition}
\begin{definition}
Let $\mathcal{D}_x:=\{y\in\mathcal{Y}\mid W(y\mid x) > W(y\mid x'), \forall x'\in\mathcal{X}, x'\ne x\}$.
The error probability of the maximum-likelihood estimation of the input $x$ 
on the basis of the output $y$ of the channel $W$ is defined as
\begin{equation*}
P_e(W) := \frac1q \sum_{x\in\mathcal{X}} \sum_{y\in\mathcal{D}_x^c} W(y\mid x).
\end{equation*}
\end{definition}
\begin{definition}
The Bhattacharyya parameter of $W$ is defined as
\begin{equation*}
Z(W) := \frac1{q(q-1)} \sum_{\substack{x\in\mathcal{X}, x'\in\mathcal{X},\\ x\ne x'}} Z_{x,x'}(W)
\end{equation*}
where the Bhattacharyya parameter of $W$ between $x$ and $x'$ is defined as
\begin{equation*}
Z_{x,x'}(W) := \sum_{y\in\mathcal{Y}} \sqrt{W(y\mid x)W(y\mid x')}.
\end{equation*}
\end{definition}
The symmetric capacity $I(W)$, the error probability $P_e(W)$, and the Bhattacharyya parameter $Z(W)$
are interrelated as in the following lemmas.
\begin{lemma}
\begin{equation*}
P_e(W)\le (q-1)Z(W).
\end{equation*}
\end{lemma}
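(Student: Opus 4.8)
The plan is to bound the pointwise error contribution $W(y\mid x)$ for each $y\in\mathcal{D}_x^c$ by a sum of Bhattacharyya-type terms, and then sum up. The key observation is definitional: if $y\in\mathcal{D}_x^c$, then by the definition of $\mathcal{D}_x$ it is \emph{not} the case that $W(y\mid x)>W(y\mid x')$ for every $x'\ne x$, so there exists some $x'\ne x$ with $W(y\mid x')\ge W(y\mid x)$. For that particular $x'$ we get $\sqrt{W(y\mid x)W(y\mid x')}\ge\sqrt{W(y\mid x)W(y\mid x)}=W(y\mid x)$, and since all the terms $\sqrt{W(y\mid x)W(y\mid x'')}$ with $x''\ne x$ are nonnegative, this yields
\begin{equation*}
W(y\mid x)\le\sum_{x''\in\mathcal{X},\,x''\ne x}\sqrt{W(y\mid x)W(y\mid x'')}\qquad\text{for all }y\in\mathcal{D}_x^c.
\end{equation*}

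Next I would substitute this into the definition of $P_e(W)$, giving
\begin{equation*}
P_e(W)=\frac1q\sum_{x\in\mathcal{X}}\sum_{y\in\mathcal{D}_x^c}W(y\mid x)
\le\frac1q\sum_{x\in\mathcal{X}}\sum_{y\in\mathcal{D}_x^c}\sum_{x'\in\mathcal{X},\,x'\ne x}\sqrt{W(y\mid x)W(y\mid x')}.
\end{equation*}
Since every summand is nonnegative, I can enlarge the inner summation range from $\mathcal{D}_x^c$ to all of $\mathcal{Y}$, and then recognize $\sum_{y\in\mathcal{Y}}\sqrt{W(y\mid x)W(y\mid x')}=Z_{x,x'}(W)$. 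This gives
\begin{equation*}
P_e(W)\le\frac1q\sum_{\substack{x\in\mathcal{X},x'\in\mathcal{X},\\x\ne x'}}Z_{x,x'}(W)=(q-1)\cdot\frac1{q(q-1)}\sum_{\substack{x\in\mathcal{X},x'\in\mathcal{X},\\x\ne x'}}Z_{x,x'}(W)=(q-1)Z(W),
\end{equation*}
which is the claimed bound.

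There is no serious obstacle here; the argument is a direct $q$-ary analogue of the binary union bound. The only point that requires a moment of care is the strict inequality in the definition of $\mathcal{D}_x$: I must make sure that membership of $y$ in $\mathcal{D}_x^c$ really does supply an index $x'\ne x$ with $W(y\mid x')\ge W(y\mid x)$ (ties are resolved in favour of the complement, so this is exactly what the definition gives), since that non-strict inequality is what makes $\sqrt{W(y\mid x)W(y\mid x')}\ge W(y\mid x)$ valid. Everything else is nonnegativity of the terms and a rearrangement of the double sum.
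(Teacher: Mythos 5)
Your proof is correct: the union-bound argument (for $y\in\mathcal{D}_x^c$ pick $x'$ with $W(y\mid x')\ge W(y\mid x)$, bound $W(y\mid x)$ by the Bhattacharyya summands, enlarge the $y$-sum to all of $\mathcal{Y}$, and normalize by $q(q-1)$) is exactly the standard derivation of this bound, and your handling of the tie-breaking in the definition of $\mathcal{D}_x$ is the right point of care. The paper itself states this lemma without proof, so there is nothing in the source to diverge from; your argument is the expected one.
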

\begin{lemma}\label{lem:IZ}\cite{sasoglu2009pad}
\begin{align*}
I(W)&\ge\log\frac{q}{1+(q-1)Z(W)}\\
I(W)&\le\log(q/2)+(\log2)\sqrt{1-Z(W)^2}\\
I(W)&\le 2(q-1)(\log \mathrm{e})\sqrt{1-Z(W)^2}.
\end{align*}
\end{lemma}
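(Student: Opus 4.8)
The plan is to work with a uniformly distributed input. Let $X$ be uniform on $\mathcal{X}$ and $Y$ the output of $W$, and write $P_Y(y)=\frac1q\sum_xW(y\mid x)$ and $p^{(y)}_x=W(y\mid x)/(qP_Y(y))$ for the output distribution and the posterior. Then $I(W)=\log q-H(X\mid Y)=\frac1q\sum_xD(W(\cdot\mid x)\,\|\,P_Y)$, and a one-line calculation gives the identity $\sum_yP_Y(y)\sqrt{p^{(y)}_xp^{(y)}_{x'}}=\frac1qZ_{x,x'}(W)$, which couples the posteriors to the Bhattacharyya parameters. Each of the three bounds will come from a pointwise estimate on the $p^{(y)}$ (or on the $W(\cdot\mid x)$) combined with Jensen's inequality over $y$. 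For the first, use that Shannon entropy is dominated by the R\'enyi entropy of order $\frac12$: for any distribution $p$, $H(p)\le2\log\sum_x\sqrt{p_x}=\log(1+\sum_{x\ne x'}\sqrt{p_xp_{x'}})$. Applying this to $p^{(y)}$, averaging against $P_Y(y)$, and pushing the average inside the logarithm by concavity, the argument becomes $1+\sum_{x\ne x'}\frac1qZ_{x,x'}(W)=1+(q-1)Z(W)$, so $H(X\mid Y)\le\log(1+(q-1)Z(W))$ and $I(W)\ge\log\frac{q}{1+(q-1)Z(W)}$.

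For the second bound I would first settle the binary case $q=2$: there $1-I(W)=H(X\mid Y)=\sum_yP_Y(y)h_2(p^{(y)}_0)$ and $Z(W)=\sum_yP_Y(y)\cdot2\sqrt{p^{(y)}_0p^{(y)}_1}$, where $h_2(t)=-t\log_2t-(1-t)\log_2(1-t)$. Concavity of $h_2$ gives $h_2(t)\ge1-|1-2t|=\phi(2\sqrt{t(1-t)})$ with $\phi(z):=1-\sqrt{1-z^2}$, and since $\phi$ is increasing and convex ($\phi''(z)=(1-z^2)^{-3/2}$), Jensen yields $H(X\mid Y)\ge\phi(Z(W))$, i.e.\ $I(W)\le\sqrt{1-Z(W)^2}$. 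To pass to general $q$, let $W^{[x_0x_1]}$ be the binary channel obtained from $W$ by restricting the input alphabet to $\{x_0,x_1\}$, so $Z(W^{[x_0x_1]})=Z_{x_0,x_1}(W)$; the key new step is the pointwise inequality, valid for every distribution $p$ on $\mathcal{X}$,
\[
H_2(p)\ \ge\ \frac{1}{2(q-1)}\sum_{x_0\ne x_1}(p_{x_0}+p_{x_1})\,h_2\!\left(\frac{p_{x_0}}{p_{x_0}+p_{x_1}}\right),
\]
where $H_2$ is entropy to base $2$. Indeed, using $(a+b)h_2(a/(a+b))=(a+b)\log_2(a+b)-a\log_2a-b\log_2b$ and the symmetry of the sum, the right-hand side equals $H_2(p)+\frac{1}{2(q-1)}\sum_{x_0\ne x_1}(p_{x_0}+p_{x_1})\log_2(p_{x_0}+p_{x_1})$, whose second term is $\le0$ because $0\le p_{x_0}+p_{x_1}\le1$. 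Applying this to $p^{(y)}$, averaging against $P_Y(y)$, and using $\sum_yP_Y(y)(p^{(y)}_{x_0}+p^{(y)}_{x_1})h_2(\cdots)=\frac2q(1-I(W^{[x_0x_1]}))$ gives $H_2(X\mid Y)\ge\frac{1}{q(q-1)}\sum_{x_0\ne x_1}(1-I(W^{[x_0x_1]}))$; bounding each summand below by $\phi(Z_{x_0,x_1}(W))$ via the binary case and applying Jensen once more (convexity of $\phi$) yields $H_2(X\mid Y)\ge\phi(Z(W))$. Rescaling to base $q$, using $\log(q/2)=1-\log2$, this is exactly $I(W)\le\log(q/2)+(\log2)\sqrt{1-Z(W)^2}$.

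For the third bound I would avoid entropies. Starting from $I(W)=\frac1q\sum_xD(W(\cdot\mid x)\,\|\,P_Y)$ and bounding each divergence by a $\chi^2$-divergence via $\ln t\le t-1$, we get $I(W)\le(\log\mathrm{e})\,\frac1q\sum_x\chi^2(W(\cdot\mid x)\,\|\,P_Y)$. Writing a variance as an average of squared pairwise differences,
\[
\frac1q\sum_x\chi^2(W(\cdot\mid x)\,\|\,P_Y)=\frac{1}{2q^2}\sum_{x,x'}\sum_y\frac{(W(y\mid x)-W(y\mid x'))^2}{P_Y(y)}.
\]
Now $P_Y(y)\ge\frac1q(W(y\mid x)+W(y\mid x'))$, the elementary inequality $(a-b)^2/(a+b)\le2(\sqrt a-\sqrt b)^2$, and $\sum_y(\sqrt{W(y\mid x)}-\sqrt{W(y\mid x')})^2=2(1-Z_{x,x'}(W))$ together bound the right-hand side by $\frac2q\sum_{x\ne x'}(1-Z_{x,x'}(W))=2(q-1)(1-Z(W))$. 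Hence $I(W)\le2(q-1)(\log\mathrm{e})(1-Z(W))\le2(q-1)(\log\mathrm{e})\sqrt{1-Z(W)^2}$, the last inequality because $Z(W)\in[0,1]$ gives $1-Z(W)\le\sqrt{(1-Z(W))(1+Z(W))}$.

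No step is individually deep; the care goes into keeping the base-$q$, base-$2$ and natural logarithms straight (the constants $\log(q/2)$, $\log2$, $\log\mathrm{e}$ in the statement are all to base $q$), into the convexity of $z\mapsto1-\sqrt{1-z^2}$ behind the two Jensen steps, and — for the third bound — into choosing the estimates $P_Y(y)\ge\frac1q(W(y\mid x)+W(y\mid x'))$ and $(a-b)^2/(a+b)\le2(\sqrt a-\sqrt b)^2$ sharply enough that the final constant is exactly $2$. The one ingredient beyond the binary theory is the pointwise entropy inequality in the second bound, which after the algebraic simplification reduces to the triviality $t\log t\le0$ on $[0,1]$.
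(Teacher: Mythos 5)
Your proposal is correct; I checked each step. Note that the paper itself does not prove Lemma~\ref{lem:IZ} at all --- it is imported by citation from \cite{sasoglu2009pad} --- so what you have done is supply a complete self-contained proof where the paper has none. Your first two bounds follow essentially the route of the cited reference: the lower bound via the R\'enyi order-$1/2$ domination $H(p)\le 2\log\sum_x\sqrt{p_x}$ plus Jensen, and the upper bound $I(W)\le\log(q/2)+(\log 2)\sqrt{1-Z(W)^2}$ via reduction to the binary subchannels $W^{[x_0x_1]}$, the chord bound $h_2(t)\ge 1-|1-2t|$, and convexity of $\phi(z)=1-\sqrt{1-z^2}$; your pointwise inequality $H_2(p)\ge\frac1{2(q-1)}\sum_{x_0\ne x_1}(p_{x_0}+p_{x_1})h_2\bigl(p_{x_0}/(p_{x_0}+p_{x_1})\bigr)$ is exactly the right glue, and its verification (the leftover term $\sum(p_{x_0}+p_{x_1})\log_2(p_{x_0}+p_{x_1})\le 0$) is sound. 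Your third bound takes a route of its own: bounding each divergence by a $\chi^2$-divergence via $\ln t\le t-1$, rewriting the average $\chi^2$ as a pairwise sum, and using $P_Y(y)\ge\frac1q(W(y\mid x)+W(y\mid x'))$ together with $(a-b)^2/(a+b)\le 2(\sqrt a-\sqrt b)^2$; the bookkeeping (ordered pairs, $q(q-1)$ terms, base-$q$ constants) is right, and in fact you prove the stronger linear estimate $I(W)\le 2(q-1)(\log\mathrm{e})(1-Z(W))$, from which the stated bound follows since $1-Z\le\sqrt{1-Z^2}$ on $[0,1]$. This elementary $\chi^2$ argument is a nice alternative to deriving the third inequality through total-variation/Bhattacharyya comparisons, at no cost in the constant.
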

\begin{definition}
The maximum and the minimum of the Bhattacharyya parameters between two symbols are defined as
\begin{align*}
Z_\text{max}(W) &:= \max_{x\in\mathcal{X},x'\in\mathcal{X},x\ne x'} Z_{x,x'}(W)\\
Z_\text{min}(W) &:= \min_{x\in\mathcal{X},x'\in\mathcal{X}} Z_{x,x'}(W).
\end{align*}
Let $\sigma:\mathcal{X}\to\mathcal{X}$ be a permutation.
Let $\sigma^i$ denote the $i$th power of $\sigma$.
The average Bhattacharyya parameter of $W$ between $x$ and $x'$ with respect to $\sigma$
is defined as the average of $Z_{z,z'}(W)$ over 
the subset $\{(z,z')=(\sigma^i(x),\sigma^i(x'))\in\mathcal{X}^2\mid i=0,1,\ldots,q!-1\}$ as
\begin{align*}
Z_{x,x'}^{\sigma}(W) &:= \frac1{q!} \sum_{i=0}^{q!-1} Z_{\sigma^i(x),\sigma^i(x')}(W).
\end{align*}
\end{definition}

\section{Channel polarization on $q$-ary DMC induced by non-linear kernel}
We consider a channel transform using a one-to-one onto mapping $g: \mathcal{X}^\ell\to\mathcal{X}^\ell$,
which is called a kernel.
In the previous works~\cite{5075875}, \cite{korada2009pcc}, it is assumed that $q=2$ and that $g$ is linear.
In~\cite{sasoglu2009pad}, $\mathcal{X}$ is arbitrary but $g$ is restricted.
In this paper, $\mathcal{X}$ and $g$ are arbitrary.
\begin{definition}
Let $W:\mathcal{X}\to\mathcal{Y}$ be a DMC.
Let $W^\ell :\mathcal{X}^\ell\to\mathcal{Y}^\ell$,
$W^{(i)}:\mathcal{X}\to\mathcal{Y}^\ell\times\mathcal{X}^{i-1}$, and $W_{u_0^{i-1}}^{(i)}: \mathcal{X}\to\mathcal{Y}^\ell$ be defined as
DMCs with transition probabilities
\begin{align*}
W^\ell(y_0^{\ell-1}\mid x_0^{\ell-1}) &:=\prod_{i=0}^{\ell-1} W(y_i\mid x_i)\\
W^{(i)}(y_0^{{\ell-1}},u_0^{i-1}\mid u_i) &:= \frac1{q^{\ell-1}}
\sum_{u_{i+1}^{\ell-1}} W^{\ell}(y_0^{{\ell-1}}\mid g(u_0^{{\ell-1}}))\\
W_{u_0^{i-1}}^{(i)}(y_0^{{\ell-1}}\mid u_i) &:= \frac1{q^{{\ell}-i-1}}
\sum_{u_{i+1}^{\ell-1}} W^{{\ell}}(y_0^{{\ell-1}}\mid g(u_0^{{\ell-1}})).
\end{align*}
\end{definition}
\begin{definition}
Let $\{B_i\}_{i=0,1,\dotsc}$ be independent random variables such that
$B_i = k$
with probability  $\frac1\ell$,
for each $k=0,\dotsc,\ell-1$.
\end{definition}

In probabilistic channel transform $W\to W^{(B_i)}$, expectation of the symmetric capacity is invariant due to the chain rule for mutual information.
The following lemma is a consequence of the martingale convergence theorem.
\begin{lemma}\label{lem:mc}
There exists a random variable $I_\infty$ such that $I(W^{(B_0)\dotsm(B_n)})$ converges to $I_\infty$
almost surely as $n\to\infty$.
\end{lemma}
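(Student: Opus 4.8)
The plan is to realize the process $I_n := I\bigl(W^{(B_0)\dotsm(B_n)}\bigr)$ as a bounded martingale with respect to the natural filtration $\mathcal{F}_n := \sigma(B_0,\dotsc,B_n)$, and then apply the martingale convergence theorem. First I would check the measurability and integrability prerequisites: for each realization of $(B_0,\dotsc,B_n)$ the object $W^{(B_0)\dotsm(B_n)}$ is a bona fide DMC with input alphabet $\mathcal{X}$ of cardinality $q$, so by the remark following the definition of $I(W)$ we have $I_n\in[0,1]$; thus $I_n$ is an $\mathcal{F}_n$-measurable random variable with $E|I_n|\le 1<\infty$.

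The crucial ingredient is the one-step conservation of symmetric capacity,
\[
\frac1\ell\sum_{i=0}^{\ell-1} I\bigl(W^{(i)}\bigr) = I(W),
\]
valid for every DMC $W:\mathcal{X}\to\mathcal{Y}$ and every kernel $g$. To see this, let $U_0^{\ell-1}$ be uniform on $\mathcal{X}^\ell$ and let $Y_0^{\ell-1}$ be the output of $W^\ell$ fed with $g(U_0^{\ell-1})$. Unwinding the definition of $W^{(i)}$ shows that its joint input/output law under a uniform input coincides with the law of $\bigl(U_i;(Y_0^{\ell-1},U_0^{i-1})\bigr)$, so $I(W^{(i)}) = I(U_i;Y_0^{\ell-1},U_0^{i-1})$; the chain rule for mutual information then gives $\sum_{i=0}^{\ell-1} I(W^{(i)}) = I(U_0^{\ell-1};Y_0^{\ell-1})$, and since $g$ is a bijection, $g(U_0^{\ell-1})$ is uniform on $\mathcal{X}^\ell$ and $W^\ell$ is a memoryless product, whence $I(U_0^{\ell-1};Y_0^{\ell-1}) = \ell\,I(W)$. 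Applying this identity conditionally to the channel $W^{(B_0)\dotsm(B_n)}$, and using that $B_{n+1}$ is uniform on $\{0,\dotsc,\ell-1\}$ and independent of $\mathcal{F}_n$, yields $E[I_{n+1}\mid\mathcal{F}_n] = I_n$, i.e. $\{I_n\}$ is a martingale.

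Finally, since $\{I_n\}$ is a martingale uniformly bounded in $[0,1]$ --- hence $L^1$-bounded, in fact uniformly integrable --- the martingale convergence theorem furnishes a random variable $I_\infty$ with $I_n\to I_\infty$ almost surely (and in $L^1$), which is the assertion. The only place that requires genuine care is the conservation identity, specifically the bookkeeping that identifies $I(W^{(i)})$, defined through the transition probabilities in the preceding definition, with the mutual information $I(U_i;Y_0^{\ell-1},U_0^{i-1})$ under the uniform input; once that is in place the rest is a routine invocation of standard martingale theory, so I do not anticipate a serious obstacle.
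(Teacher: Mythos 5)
Your proof is correct and follows exactly the route the paper takes (and merely sketches): the symmetric capacity is conserved in expectation under the transform by the chain rule for mutual information, making $I_n$ a bounded martingale, and the martingale convergence theorem gives almost sure convergence. Your write-up simply fills in the bookkeeping that the paper leaves implicit.
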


When $q=2$ and $g(u_0^1) = (u_0+u_1,u_1)$, Ar{\i}kan showed that $P(I_\infty\in\{0,1\})=1$~\cite{5075875}.
This result is called channel polarization phenomenon since subchannels polarize to noiseless channels and pure noise channels.
Korada, {\c S}a{\c s}o{\u g}lu, and Urbanke consider channel polarization phenomenon when $q=2$ and $g$ is linear~\cite{korada2009pcc}.

From Lemma~\ref{lem:IZ}, $I(W)$ is close to 0 and 1 when $Z(W)$ is close to 1 and 0, respectively.
Hence, it would be sufficient to prove channel polarization
if one can show that $Z(W^{(B_1)\dotsm(B_n)})$ converges to $Z_\infty\in\{0,1\}$ almost surely.
Here we instead show a weaker version of the above property in the following lemma and its corollary.
\begin{lemma}\label{lem:polar}
Let $\{\mathcal{Y}_n\}_{n\in\mathbb{N}}$ be a sequence of discrete sets.
Let $\{W_n:\mathcal{X}\to\mathcal{Y}_n\}_{n\in\mathbb{N}}$ be a sequence of $q$-ary DMCs.
Let $\sigma$ and $\tau$ be permutations on $\mathcal{X}$.
Let
\begin{equation*}
W_n'(y_1, y_2 \mid x) =  W_n(y_1\mid \sigma(x))W_n(y_2\mid \tau(x))
\end{equation*}
where $W_n:\mathcal{X}\to\mathcal{Y}_n$, $W_n' : \mathcal{X}\to\mathcal{Y}_n^2$.
\if0
Assume
\begin{equation*}
\lim_{n\to\infty} |I(W_n')-I(W_n)|=0
\end{equation*}
with probability 1.
Then
$\lim_{n\to\infty}P(Z^{\tau\sigma^{-1}}_{x,x'}(W_n) \in (\delta,1-\delta)) =0$
for any $x\in\mathcal{X}$, $x'\in\mathcal{X}$ and $\delta\in(0,1/2)$.
\fi
Assume $\lim_{n\to\infty} I(W_n')-I(W_n)=0$.
Then, for any $\delta\in(0,1/2)$, there exists $m$ such that $Z^{\tau\sigma^{-1}}_{x,x'}(W_n) \notin (\delta,1-\delta)$
for any $x\in\mathcal{X}$, $x'\in\mathcal{X}$ and $n\ge m$.
\end{lemma}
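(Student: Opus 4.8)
The plan is to argue by contradiction, reducing — via a compactness argument in the space of channel ``posterior laws'' — to the borderline case $I(W')=I(W)$. Put $\pi:=\tau\sigma^{-1}$ and suppose the conclusion fails for some $\delta\in(0,1/2)$: then there are $n_1<n_2<\dotsb$ and inputs $(x_k,x_k')$ with $Z^{\pi}_{x_k,x_k'}(W_{n_k})\in(\delta,1-\delta)$, and since $\mathcal{X}\times\mathcal{X}$ is finite we may pass to a subsequence along which $(x_k,x_k')$ is a fixed pair $(x,x')$. I will construct, along a further subsequence, a limiting channel $W_\ast$ with $I(W_\ast')=I(W_\ast)$ but $Z^{\pi}_{x,x'}(W_\ast)\in[\delta,1-\delta]$, and then prove that $I(W')=I(W)$ forces $Z^{\pi}_{a,b}(W)\in\{0,1\}$ for every $a,b\in\mathcal{X}$; these two facts contradict each other.

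For the limit, encode each $W_n$ by the probability measure $\mu_{W_n}$ on the simplex $\Delta:=\{p\in[0,1]^{\mathcal{X}}:\sum_{x}p_x=1\}$ that is the law of the posterior vector $(P(X=x\mid Y))_{x\in\mathcal{X}}$ of $W_n$ under a uniform input $X$. Writing $W(y\mid x)=q\,\rho(y)\,p^{y}_x$, with $\rho$ the output distribution and $p^{y}$ the posterior, one obtains
\begin{equation*}
I(W)=\log q+\int_{\Delta}\Bigl(\sum_{x}p_x\log p_x\Bigr)d\mu_W(p),\qquad Z_{a,b}(W)=q\int_{\Delta}\sqrt{p_a\,p_b}\;d\mu_W(p),
\end{equation*}
and, with $c(p,p'):=\sum_{x}p_{\sigma(x)}p'_{\tau(x)}$ and $Q_x:=p_{\sigma(x)}p'_{\tau(x)}/c(p,p')$,
\begin{equation*}
I(W')=\log q+q\int_{\Delta}\!\int_{\Delta}c(p,p')\Bigl(\sum_{x}Q_x\log Q_x\Bigr)d\mu_W(p)\,d\mu_W(p'),
\end{equation*}
the last integrand being read as $0$ on $\{c(p,p')=0\}$. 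Hence $I(W)$, $Z_{a,b}(W)$, $Z^{\pi}_{a,b}(W)$ and $I(W')$ are weakly continuous functionals of $\mu_W$; the only delicate point is continuity of the last integrand on $\{c=0\}$, where its modulus is at most $c(p,p')\log q$, so it extends to a bounded continuous function on $\Delta\times\Delta$ (and weak convergence passes to products of measures).

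Since $\Delta$ is a compact metric space, $\mu_{W_{n_k}}$ has a weakly convergent subsequence, say to $\mu_\ast$, which we realize as $\mu_{W_\ast}$ by letting $W_\ast$ have output alphabet $\Delta$, uniform input, and $W_\ast(dp\mid x):=q\,p_x\,\mu_\ast(dp)$ (admissible because $\int p_x\,d\mu_\ast=1/q$). By continuity, $I(W_\ast')-I(W_\ast)=\lim_k(I(W_{n_k}')-I(W_{n_k}))=0$ and $Z^{\pi}_{x,x'}(W_\ast)\in[\delta,1-\delta]\subset(0,1)$. Now let $Y_1\sim W_\ast(\cdot\mid\sigma(X))$ and $Y_2\sim W_\ast(\cdot\mid\tau(X))$ be conditionally independent given the uniform $X$; since relabeling the input by the bijection $\sigma$ leaves $I$ unchanged, the chain rule gives $I(W_\ast')-I(W_\ast)=I(X;Y_2\mid Y_1)$, which therefore vanishes, i.e.\ $X\perp Y_2\mid Y_1$ almost surely. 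A direct computation shows the posterior of $X$ given $Y_1=p_1$ is $x\mapsto p_{1,\sigma(x)}$, so the independence forces $W_\ast(\cdot\mid\tau(x))=W_\ast(\cdot\mid\tau(x'))$ whenever some $p_1$ in a set of positive $\mu_\ast$-measure has $p_{1,\sigma(x)}>0$ and $p_{1,\sigma(x')}>0$. That condition is exactly $\mu_\ast(\{p_{\sigma(x)}p_{\sigma(x')}>0\})>0$, i.e.\ $Z_{\sigma(x),\sigma(x')}(W_\ast)>0$, while $W_\ast(\cdot\mid\tau(x))=W_\ast(\cdot\mid\tau(x'))$ yields $Z_{\tau(x),\tau(x')}(W_\ast)=1$. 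Setting $a=\sigma(x)$, $b=\sigma(x')$ (so $\tau(x)=\pi(a)$, $\tau(x')=\pi(b)$), this reads: for all $a,b\in\mathcal{X}$, $\,Z_{a,b}(W_\ast)>0\Rightarrow Z_{\pi(a),\pi(b)}(W_\ast)=1$.

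Finally, fix $a,b$ and set $z_i:=Z_{\pi^i(a),\pi^i(b)}(W_\ast)$; this sequence is periodic with period dividing $q!$ and satisfies $z_i>0\Rightarrow z_{i+1}=1$. Hence either $z_i=0$ for all $i$, or — propagating the implication around the (finite) orbit and invoking periodicity — $z_i=1$ for all $i$; in either case $Z^{\pi}_{a,b}(W_\ast)=\tfrac1{q!}\sum_{i=0}^{q!-1}z_i\in\{0,1\}$, contradicting $Z^{\pi}_{x,x'}(W_\ast)\in(0,1)$. The heart of the argument is this borderline case, namely extracting the pairwise implication from $I(X;Y_2\mid Y_1)=0$ through the posterior representation; the surrounding compactness reduction is routine once the boundary continuity of $\mu_W\mapsto I(W')$ has been checked.
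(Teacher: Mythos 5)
Your proposal is correct, but it proves the lemma by a genuinely different mechanism than the paper. The paper's proof is direct and quantitative: after the same reduction $I(W_n')-I(W_n)=I(Z;Y_2\mid Y_1)$, it lower-bounds this conditional mutual information by a cut-off-rate (Bhattacharyya) expression and, after a Cauchy--Schwarz step, obtains $I(Z;Y_2\mid Y_1)\ge -\log\bigl[1-\tfrac1{q^2}\sum_{z,x}Z_{z,x}(W_n)^2\bigl(1-Z_{\pi(z),\pi(x)}(W_n)\bigr)\bigr]$ with $\pi=\tau\sigma^{-1}$, so that the hypothesis forces $Z_{z,x}(W_n)^2\bigl(1-Z_{\pi(z),\pi(x)}(W_n)\bigr)\to0$ for every pair; the conclusion then follows by the same orbit/periodicity observation you use at the end. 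You instead argue by contradiction: you encode channels by the law of the posterior vector on the simplex, use weak compactness to extract a limit $\mu_\ast$, realize it as a channel $W_\ast$ with $I(W_\ast')=I(W_\ast)$, and analyze the exact equality case via $I(X;Y_2\mid Y_1)=0$, i.e.\ conditional independence, to get the implication $Z_{a,b}(W_\ast)>0\Rightarrow Z_{\pi(a),\pi(b)}(W_\ast)=1$ — which is precisely the limiting form of the paper's quantitative statement. I checked the delicate points of your route and they hold: the integrand for $I(W')$ equals $\sum_x t_x\log t_x-c\log c$ with $t_x=p_{\sigma(x)}p'_{\tau(x)}$, hence is bounded and continuous across $\{c=0\}$; $\int p_x\,d\mu_\ast=1/q$ passes to the limit so $W_\ast$ is well defined and has posterior law $\mu_\ast$; and the identification of $Z_{a,b}(W_\ast)>0$ with positive mass of $\{p_ap_b>0\}$ is right. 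The trade-offs: the paper's argument is elementary, stays entirely within the discrete-output setting it assumes, and is in principle quantitative (one could extract how fast $Z^{\pi}$ leaves $(\delta,1-\delta)$ in terms of $I(W_n')-I(W_n)$), whereas your argument is softer and rate-free, and it requires measure-theoretic machinery outside the paper's framework — weak convergence on $\Delta$, product convergence, and the chain rule plus the equivalence $I(X;Y_2\mid Y_1)=0\Leftrightarrow X\perp Y_2\mid Y_1$ for a channel whose output alphabet is the continuum $\Delta$; these are standard but should be cited or justified if written up, since the paper only defines $I$, $Z$ for (at most) countable output alphabets. Conceptually, your equality-case analysis nicely isolates why the phenomenon happens, while the paper's inequality does the same work in one quantitative stroke.
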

\begin{IEEEproof}
Let $Z$, $Y_1$ and $Y_2$ be random variables which take values on $\mathcal{X}$, $\mathcal{Y}_n$ and $\mathcal{Y}_n$, respectively,
and jointly obey the distribution
\begin{multline*}
P_n(Z=z,\,Y_1=y_1,Y_2=y_2)\\
=\frac1q W_n(y_1\mid \sigma(z))W_n(y_2\mid \tau(z)).
\end{multline*}
Since $I(W_n') = I(Z;Y_1,Y_2)$ and $I(W_n) = I(Z;Y_1)$,
\begin{equation*}
I(Z;Y_1,Y_2)-I(Z;Y_1)
=I(Z;Y_2\mid Y_1)
\end{equation*}
tends to 0 by the assumption.
Since the mutual information is lower bounded by the cut-off rate,
one obtains
\begin{align*}
&I(Z;Y_2\mid Y_1)
\ge
-\log \sum_{y_1\in\mathcal{Y}_n,y_2\in\mathcal{Y}_n} P_n(Y_1=y_1)\nonumber\\
&\quad\times\Bigg[\sum_{z\in\mathcal{X}} P_n(Z=z\mid Y_1=y_1)\nonumber\\
&\hspace{9em}\times\sqrt{P_n(Y_2=y_2\mid Z=z, Y_1=y_1)}\Bigg]^2\nonumber\\
&=
-\log \sum_{y_1\in\mathcal{Y}_n,z\in\mathcal{X},x\in\mathcal{X}} P_n(Y_1=y_1)P_n(Z=z\mid Y_1=y_1)\nonumber\\
&\quad\times P_n(Z=x\mid Y_1=y_1)
 Z_{\tau(z),\tau(x)}(W_n)\nonumber\\
&=
-\log \sum_{y_1\in\mathcal{Y}_n,z\in\mathcal{X},x\in\mathcal{X}} q_n(y_1,z,x)
 Z_{\tau(\sigma^{-1}(z)),\tau(\sigma^{-1}(x))}(W_n)
\end{align*}
where
\begin{multline*}
q_n(y_1,z,x)
 := P_n(Y_1=y_1)\\
\times P_n(Z=\sigma^{-1}(z)\mid Y_1=y_1)
 P_n(Z=\sigma^{-1}(x)\mid Y_1=y_1).
\end{multline*}
\if0
Fix $\delta>0$, $x\in\mathcal{X}$, and $x'\in\mathcal{X}$.
Let $\mathcal{A}:=\{n\in\mathbb{N}\mid Z^{\tau\sigma^{-1}}_{x,x'}(W_n)>\delta\}$.
To prove the lemma, it suffices to show $\lim_{n\in\mathcal{A}} Z^{\tau\sigma^{-1}}_{x,x'}(W_n) = 1$ with probability 1.
For each $n\in\mathcal{A}$, there exists $i\in\{0,\dotsc,q!-1\}$ such that $Z_{(\tau\sigma^{-1})^i(x),(\tau\sigma^{-1})^{i}(x')}(W_n) > \delta/(q!)$.
\fi
Since
\begin{align*}
&\sum_{y_1\in\mathcal{Y}} q_n(y_1,z,x) = \sum_{y_1\in\mathcal{Y}}P_n(Y_1=y_1)\\
&\quad\times 
\left(\sqrt{P_n(Z=\sigma^{-1}(z)\mid Y_1=y_1)
 P_n(Z=\sigma^{-1}(x)\mid Y_1=y_1)}\right)^2\\
&\ge
\bigg(\sum_{y_1\in\mathcal{Y}}P_n(Y_1=y_1)\\
&\quad\times 
\sqrt{P_n(Z=\sigma^{-1}(z)\mid Y_1=y_1)
 P_n(Z=\sigma^{-1}(x)\mid Y_1=y_1)}\bigg)^2\\
&=\frac1{q^2} Z_{z,x}(W_n)^2
\end{align*}
it holds
\begin{align*}
&I(Z;Y_2\mid Y_1)
\ge
-\log \bigg[1- \\
&\frac1{q^2}\sum_{z\in\mathcal{X},x\in\mathcal{X}} Z_{z,x}(W_n)^2
(1
- Z_{\tau(\sigma^{-1}(z)),\tau(\sigma^{-1}(x))}(W_n))\bigg].
\end{align*}
The convergence of $I(Z;Y_2\mid Y_1)$ to 0 implies that
\begin{equation*}
Z_{z,x}(W_n)^2
(1
- Z_{\tau(\sigma^{-1}(z)),\tau(\sigma^{-1}(x))}(W_n))
\end{equation*}
converges to 0 for any $(z,x)\in\mathcal{X}^2$.
It consequently implies that
for any $\delta\in(0,1/2)$, there exists $m$ such that $Z^{\tau\sigma^{-1}}_{x,x'}(W_n) \notin (\delta,1-\delta)$
for any $x\in\mathcal{X}$, $x'\in\mathcal{X}$ and $n\ge m$.
\end{IEEEproof}

Using Lemma~\ref{lem:polar},
one can obtain a partial result of the channel polarization as follows.
\begin{corollary}
Assume that 
there exists $u_0^{\ell-2}\in\mathcal{X}^{\ell-1}$, 
$(i,j)\in \{0,1,\dotsc,\ell-1\}^2$
and permutations $\sigma$ and $\tau$ on $\mathcal{X}$ such that
$i$-th element of $g(u_0^{\ell-1})$ and $j$-th element of $g(u_0^{\ell-1})$ are $\sigma(u_{\ell-1})$ and $\tau(u_{\ell-1})$, respectively,
and such that for any $v_0^{\ell-2}\ne u_0^{\ell-2}\in\mathcal{X}^{\ell-1}$
there exists $m\in\{0,1,\dotsc,\ell-1\}$ and a permutation $\mu$ on $\mathcal{X}$ such that
$m$-th element of $g(v_0^{\ell-1})$ is $\mu(v_{\ell-1})$.
Then, for almost every sequence $b_1,\dotsc,b_n,\dotsc$ of $0,\dotsc,\ell-1$, and for any $\delta\in(0,1/2)$,
there exists $m$ such that $Z^{\tau\sigma^{-1}}_{x,x'}(W^{(b_1)\dotsm(b_n)}) \notin (\delta,1-\delta)$
for any $x\in\mathcal{X}$, $x'\in\mathcal{X}$ and $n\ge m$.
\end{corollary}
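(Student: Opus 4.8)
The plan is to derive the corollary from Lemma~\ref{lem:polar} applied to the sequence of channels $W_n:=W^{(b_1)\dotsm(b_n)}$, with $\sigma,\tau$ the permutations supplied by the hypothesis; the content lies in verifying the hypothesis of that lemma, namely that $I(W_n')-I(W_n)\to0$ for almost every sequence $b_1,b_2,\dotsc$, where $W_n'(y_1,y_2\mid x):=W_n(y_1\mid\sigma(x))W_n(y_2\mid\tau(x))$.

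First I would bound $I(W_n')-I(W_n)$ in terms of the gap $I(W_n^{(\ell-1)})-I(W_n)$ by exploiting the structural condition on $g$ (assume $i\neq j$, the case of interest). Since the output of $W_n^{(\ell-1)}$ contains $u_0^{\ell-2}$, which is uniform and independent of the input, $W_n^{(\ell-1)}$ is the uniform average over $u_0^{\ell-2}\in\mathcal{X}^{\ell-1}$ of the channels $u_{\ell-1}\mapsto\prod_k W_n(y_k\mid g(u_0^{\ell-1})_k)$. For the distinguished value of $u_0^{\ell-2}$ the coordinates $i$ and $j$ deliver two conditionally independent noisy copies $W_n(\cdot\mid\sigma(u_{\ell-1}))$ and $W_n(\cdot\mid\tau(u_{\ell-1}))$ of the input, so that component channel has symmetric capacity at least $I(W_n')$; for every other value $v_0^{\ell-2}$ the hypothesis gives a coordinate $m$ delivering one copy $W_n(\cdot\mid\mu(v_{\ell-1}))$, hence symmetric capacity at least $I(W_n)$ --- using that $I(\cdot)$ is unchanged by input permutations and is not decreased by adjoining further conditionally independent outputs. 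Averaging over $u_0^{\ell-2}$ and using $I(W_n')\ge I(W_n)$ gives
\[
0\le I(W_n')-I(W_n)\le q^{\ell-1}\bigl(I(W_n^{(\ell-1)})-I(W_n)\bigr).
\]

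It remains to show $I(W_n^{(\ell-1)})-I(W_n)\to0$ almost surely. The process $M_n:=I(W^{(B_1)\dotsm(B_n)})$ is a martingale bounded in $[0,1]$, since by the chain rule $\mathbb{E}[M_{n+1}\mid\mathcal{F}_n]=\frac1\ell\sum_{i=0}^{\ell-1}I(W_n^{(i)})=I(W_n)$. Hence $\sum_n\mathbb{E}[(M_{n+1}-M_n)^2]=\lim_n\mathbb{E}[M_n^2]-\mathbb{E}[M_0^2]<\infty$, and because $\mathbb{E}[(M_{n+1}-M_n)^2\mid\mathcal{F}_n]=\frac1\ell\sum_{i=0}^{\ell-1}(I(W_n^{(i)})-I(W_n))^2\ge\frac1\ell(I(W_n^{(\ell-1)})-I(W_n))^2$, summation together with monotone convergence gives $\sum_n(I(W_n^{(\ell-1)})-I(W_n))^2<\infty$ almost surely, so $I(W_n^{(\ell-1)})-I(W_n)\to0$ almost surely. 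With the sandwich above this gives $I(W_n')-I(W_n)\to0$ almost surely, and Lemma~\ref{lem:polar} applied to $\{W_n\}$ along each such sequence yields exactly the claim.

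I expect the first step --- turning the combinatorial condition on $g$ into the displayed lower bound on $I(W_n^{(\ell-1)})$ --- to be the main obstacle: one must check that the remaining output coordinates of the kernel restricted to a fixed $u_0^{\ell-2}$, which are arbitrary and possibly non-injective functions of $u_{\ell-1}$, can only help, and that the average over the frozen coordinates carries exactly the weight $q^{-(\ell-1)}$. The martingale step is routine given Lemma~\ref{lem:mc}.
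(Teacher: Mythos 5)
Your proposal is correct and takes essentially the same route as the paper: decompose $I(W^{(\ell-1)})$ as the average of the conditional mutual informations given $U_0^{\ell-2}$, bound the distinguished term below by $I(W_n')$ and all other terms by $I(W_n)$, conclude $0\le I(W_n')-I(W_n)\le q^{\ell-1}\bigl(I(W_n^{(\ell-1)})-I(W_n)\bigr)\to 0$ almost surely, and apply Lemma~\ref{lem:polar}. Your summable-squared-martingale-increment argument simply supplies a rigorous justification for the step the paper asserts directly from the almost-sure convergence of $I(W^{(B_1)\dotsm(B_n)})$, so it is a welcome elaboration rather than a different approach.
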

\begin{proof}
Since $I(W^{(B_1)\dotsm(B_n)})$ converges to $I_\infty$ almost surely,
$|I(W^{(B_1)\dotsm(B_{n})(\ell-1)})-I(W^{(B_1)\dotsm(B_n)})|$ has to converge to 0 almost surely.
Let $U_0^{\ell-1}$ and $Y_0^{\ell-1}$ denote random variables ranging over $\mathcal{X}^{\ell}$ and $\mathcal{Y}^{\ell}$,
and obeying the distribution
\begin{equation*}
P(U_0^{i}=u_0^{\ell-1}, Y_0^{\ell-1}=y_0^{\ell-1}) = \frac1q W^{(\ell-1)}(y_0^{\ell-1},u_0^{\ell-2}\mid u_{\ell-1}).
\end{equation*}
Then, it holds 
\begin{align*}
I(W^{(\ell-1)}) &= I(Y_0^{\ell-1}, U_0^{\ell-2}; U_{\ell-1})\\
&= I(Y_0^{\ell-1}; U_{\ell-1}\mid U_0^{\ell-2})\\
&=\sum_{u_0^{\ell-2}}\frac1{q^{\ell-1}} I(Y_0^{\ell-1}; U_{\ell-1}\mid U_0^{\ell-2}=u_0^{\ell-2}).
\end{align*}
From the assumption,
$I(Y_0^{\ell-1}; U_{\ell-1}\mid U_0^{\ell-2}=u_0^{\ell-2})\ge I(W)$ for all $u_0^{\ell-2}\in\mathcal{X}^{\ell-1}$.
Hence,
$I(W^{(B_1)\dotsm(B_{n})'})-I(W^{(B_1)\dotsm(B_n)})$ has to converge to 0 almost surely.
By applying Lemma~\ref{lem:polar}, one obtains the result.
\end{proof}
When $q=2$,
since $Z(W)=Z_{0,1}(W)$,
this corollary immediately implies the channel polarization phenomenon,
although it is not sufficient for general $q\ne 2$.
Note that in this derivation one does not use extra conditions e.g., symmetricity of DMC, linearity of a kernel.
%

If a kernel is linear, a more detailed condition is obtained.
\begin{definition}
Assume $(\mathcal{X}, +, \cdot)$ be a commutative ring. A kernel $g:\mathcal{X}^\ell\to\mathcal{X}^\ell$ is said to be linear if $g(ax+bz) = ag(x) + bg(z)$
for all $a\in\mathcal{X}$, $b\in\mathcal{X}$, $x\in\mathcal{X}^\ell$, and $z\in\mathcal{X}^\ell$.
\end{definition}

If $g$ is linear, $g$ can be represented by a square matrix $G$ such that $g(u_0^{\ell-1}) = u_0^{\ell-1}G$.
Let $U_0^{\ell-1}$, $X_0^{\ell-1}$ and $Y_0^{\ell-1}$ denote random variables taking values on
$\mathcal{X}^\ell$, $\mathcal{X}^\ell$ and $\mathcal{Y}^\ell$, respectively, and obeying distribution
\begin{multline*}
P(U_0^{\ell-1}=u_0^{\ell-1}, X_0^{\ell-1} = x_0^{\ell-1}, Y_0^{\ell-1} = y_0^{\ell-1}) \\
= \frac1{2^\ell} W^\ell\left(y_0^{\ell-1}\mid u_0^{\ell-1}G\right)
\mathbb{I}\{x_0^{\ell-1}V=u_0^{\ell-1}\}
\end{multline*}
where $V$ denotes an $\ell\times\ell$ full-rank upper triangle matrix.
There exists a one-to-one correspondence between $X_0^i$ and $U_0^i$ for all $i\in\{0,\dotsc,\ell-1\}$.
Hence, statistical properties of $W^{(i)}$ are invariant under an operation $G\to VG$.
Further, a permutation of columns of $G$ does not change statistical properties of $W^{(i)}$ either.
Since any full-rank matrix can be decomposed to the form $VLP$ where $V$, $L$, and $P$ are
upper triangle, lower triangle, and permutation matrices,
without loss of generality we assume that $G$ is a lower triangle matrix and
that $G_{kk} = 1 $ where
$k\in\{0,\dotsc,\ell-1\}$ is the largest number such that the number of non-zero elements in $k$-th row of $G$ is greater than 1,
and where $G_{ij}$ denotes $(i,j)$ element of $G$.
\begin{theorem}\label{thm:prime}
Assume that $\mathcal{X}$ is a field of prime cardinality, and that linear kernel $G$ is not diagonal.
Then, $P(I_\infty\in\{0,1\})=1$.
\end{theorem}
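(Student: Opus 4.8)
The idea is to promote the weak polarization statement of the corollary above to $Z(W^{(B_1)\cdots(B_n)})\to\{0,1\}$, the new ingredient being that $\mathbb Z/q\mathbb Z$ has no nontrivial proper subgroup when $q$ is prime. Write $W_n:=W^{(B_1)\cdots(B_n)}$. In the standard form $G$ is lower triangular with $G_{kk}=1$, every row below the $k$-th has exactly one nonzero entry (on the diagonal), and — $G$ being non-diagonal — $k\ge 1$, so the $k$-th row has a nonzero off-diagonal entry $a:=G_{kj^\ast}\ne0$ with $j^\ast<k$. Hence for every $i\le k$ the $i$-th coordinate of $g(u_0^{\ell-1})$ is a function of $u_0^{k}$ alone; concretely its $k$-th coordinate equals $u_k$, and its $j^\ast$-th coordinate equals $a\,u_k+c$, where $c=c(u_0^{k-1})$ takes every value in $\mathcal X$ as $u_0^{k-1}$ varies (because $G_{j^\ast j^\ast}\ne0$). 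Running the argument that proves the corollary with its last subchannel replaced by the $k$-th one is legitimate: Lemma~\ref{lem:mc} equally forces $I(W_n^{(k)})-I(W_n)\to0$ almost surely, and, since in standard form the first $k+1$ coordinates of $g$ do not involve the averaged variables $u_{k+1}^{\ell-1}$, the $k$-th coordinate (being $u_k$ itself) supplies the covering coordinate with permutation $\mathrm{id}$ for every side-information value. Taking $\tau=\mathrm{id}$ and $\sigma(x)=ax+\beta$, we get for each $\beta\in\mathcal X$ that $Z^{\tau\sigma^{-1}}_{x,x'}(W_n)$ polarizes; intersecting over the $q$ choices of $\beta$ and setting $\pi_\gamma:x\mapsto a^{-1}x+\gamma$, we obtain: almost surely, for every $\gamma\in\mathcal X$ and every $\delta\in(0,1/2)$ there is $m$ with $Z^{\pi_\gamma}_{x,x'}(W_n)\notin(\delta,1-\delta)$ for all $x,x'\in\mathcal X$ and $n\ge m$; keeping only the $i=0$ term of the average defining $Z^{\pi_\gamma}$ and using $Z_{\cdot,\cdot}\le1$, also $Z_{x,x'}(W_n)\notin(\delta,1-\delta)$ eventually, for all $x,x'$.

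Next comes the combinatorial core. Fix a small $\delta>0$ depending only on $q$ with $(1-\delta)/q!>\delta$, $1-q!\,\delta>\delta$ and $(1-\delta)^2-2\delta>\delta$. For $n$ past all the thresholds above for this $\delta$, the relation $x\sim_n x'\iff Z_{x,x'}(W_n)\ge1-\delta$ is an equivalence relation: transitivity follows from the angular triangle inequality $Z_{x,x''}(W)\ge Z_{x,x'}(W)Z_{x',x''}(W)-\sqrt{(1-Z_{x,x'}(W)^2)(1-Z_{x',x''}(W)^2)}$ (valid because $Z_{x,x'}(W)$ is the cosine of an angle in $[0,\pi/2]$ and great-circle distance is a metric), which gives $Z_{x,x''}(W_n)>\delta$ and hence $Z_{x,x''}(W_n)\ge1-\delta$. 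Moreover $x\sim_n x'$ implies $\pi_\gamma(x)\sim_n\pi_\gamma(x')$ for every $\gamma$: then $Z^{\pi_\gamma}_{x,x'}(W_n)\ge(1-\delta)/q!>\delta$, so $Z^{\pi_\gamma}_{x,x'}(W_n)\ge1-\delta$, so every term of the defining average — in particular $Z_{\pi_\gamma(x),\pi_\gamma(x')}(W_n)$ — is $\ge1-q!\,\delta>\delta$, hence $\ge1-\delta$. This implication passes to all compositions of the $\pi_\gamma$; in particular, as $\pi_0(x)=a^{-1}x$ has finite multiplicative order, every translation $x\mapsto x+c$ is such a composition, and applying the implication to a translation and its inverse shows that $\sim_n$ is invariant under all translations of $\mathcal X\cong\mathbb Z/q\mathbb Z$. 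Its classes are therefore the cosets of an additive subgroup, which is trivial since $q$ is prime. So the partition is trivial: either $Z_{x,x'}(W_n)\ge1-\delta$ for all $x,x'$, whence $Z(W_n)\ge1-\delta$, or $Z_{x,x'}(W_n)\le\delta$ for all $x\ne x'$, whence $Z(W_n)\le\delta$. Letting $\delta\downarrow0$, every subsequential limit of $Z(W_n)$ lies in $\{0,1\}$ almost surely. Finally, by Lemma~\ref{lem:mc} $I(W_n)\to I_\infty$ almost surely, while by Lemma~\ref{lem:IZ} $Z(W)\to0$ entails $I(W)\to1$ and $Z(W)\to1$ entails $I(W)\to0$; together with the above this is incompatible with $I_\infty\notin\{0,1\}$, so $P(I_\infty\in\{0,1\})=1$.

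The first and last parts are bookkeeping. The main obstacle is the combinatorial core: the corollary delivers only the weak statement $Z^{\pi_\gamma}_{x,x'}(W_n)\notin(\delta,1-\delta)$, so one has to manage the $\delta$-dependent thresholds over a shrinking family of $\delta$'s in order to turn "closeness" into a genuine equivalence relation, and one has to observe that the single multiplier $a^{-1}$, together with all shifts $\gamma$, already generates the group of all translations — which is exactly what lets the primality of $q$ collapse the confusability partition of $W_n$ to the trivial one. The angular (cosine) bound for $Z_{x,x'}$ that underlies transitivity should also be checked carefully, since it is the only place the geometry of the Bhattacharyya parameter enters.
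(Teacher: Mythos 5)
Your proposal is correct and follows essentially the same route as the paper: reduce $G$ to the lower-triangular standard form, use the $k$-th subchannel to extract two looks at $u_k$ (identity plus the affine maps $x\mapsto G_{kj^\ast}x+c$ with every shift $c$ realized by the conditioning $u_0^{k-1}$), apply Lemma~\ref{lem:polar}, and then use primality of $q$ to collapse the confusability structure, finishing with Lemma~\ref{lem:IZ} and Lemma~\ref{lem:mc}. Your explicit combinatorial core --- the pairwise dichotomy, translation invariance generated by $\pi_\gamma\circ\pi_0^{r-1}$, and especially the transitivity of the ``$Z_{x,x'}$ close to $1$'' relation via the angular (Hellinger) triangle inequality --- is exactly the content the paper compresses into its final unproven ``if and only if'' sentence, so it is a faithful (and more detailed) rendering of the intended argument rather than a different one.
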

\begin{IEEEproof}
It holds
\begin{multline*}
W^{(k)}(y_0^{\ell-1}, u_0^{k-1} \mid u_k) = \frac1{q^{\ell-1}}
 \prod_{j=k+1}^{\ell-1} \left(\sum_{x\in\mathcal{X}} W(y_j \mid x)\right)\\
\times \prod_{j\in S_0} W(y_j\mid x_j) \prod_{j\in S_1} W(y_j\mid G_{kj} u_k + x_j)
\end{multline*}
where $S_0 := \{j \in\{0,\dotsc,\ell-1\}\mid G_{kj} = 0\}$,
$S_1 := \{j \in\{0,\dotsc,\ell-1\}\mid G_{kj} \ne 0\}$, and
$x_j$ is $j$-th element of $(u_0^{k-1}, 0_k^{\ell-1})G$ where $0_k^{\ell-1}$ is all-zero vector of length $\ell-k$.
Let $m\in\{0,\dotsc,k-1\}$ be such that $G_{km}\ne 0$.
Since each $u_0^{k-1}$ occurs with positive probability $1/q^{k}$,
we can apply Lemma~\ref{lem:polar} with $\sigma(x) = x$ and $\tau(x) = G_{km}x + z$ for arbitrary $z\in\mathcal{X}$.
Hence, for sufficiently large $n$, $Z_{x,x'}^{\mu}(W^{(B_1)\dotsm(B_n)})$ is close to 0 or 1 almost surely
where $\mu(x) = G_{km}^ix +z$ for all $i\in\{0,\dotsc,q-2\}$ and $z\in\mathcal{X}$.
Since $q$ is a prime,
when $\mu_0(z)=z+x'-x$ for $x\ne x'$,
$Z_{x,x'}^{\mu_0}(W^{(B_1)\dotsm(B_n)})$ is close to 0 or 1 if and only if $Z(W^{(B_1)\dotsm(B_n)})$ is close to 0 or 1, respectively.
\end{IEEEproof}
This result is a simple generalization of the special case considered by {\c S}a{\c s}o{\u g}lu, Telatar, and Ar{\i}kan~\cite{sasoglu2009pad}.
For a prime power $q$ and a finite field $\mathcal{X}$, 
we show a sufficient condition for channel polarization in the following corollary. 
\begin{corollary}\label{cor:ff}
Assume that $\mathcal{X}$ is a field and that a linear kernel $G$ is not diagonal.
If there exists $j\in\{0,\dotsc,k-1\}$ such that $G_{kj}$ is a primitive element.
Then, $P(I_\infty\in\{0,1\})=1$.
\end{corollary}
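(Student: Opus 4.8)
The plan is to rerun the proof of Theorem~\ref{thm:prime} almost verbatim, with the single change that the role played there by ``$q$ prime, hence a nontrivial translation of $\mathcal{X}$ is one $q$-cycle'' is now played by ``$G_{kj}$ primitive, hence multiplication by $G_{kj}$ is one $(q-1)$-cycle on $\mathcal{X}^{\times}$.'' Set $\alpha:=G_{kj}$ with $j<k$ and $W_n:=W^{(B_1)\dotsm(B_n)}$; we may assume $q>2$, since for $q=2$ the statement is a special case of Theorem~\ref{thm:prime}. By the normalization fixed before Theorem~\ref{thm:prime}, $G$ is lower triangular; being full rank it has all diagonal entries nonzero, and $G_{kk}=1$. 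Since $G$ is lower triangular, $G_{ik}=0$ for $i<k$, so in the formula for $W^{(k)}$ (with $W_n$ in place of $W$) the factor indexed by $k$ is exactly $W_n(y_k\mid u_k)$, while the factor indexed by $j$ is $W_n(y_j\mid\alpha u_k+x_j)$ with $x_j$ the $j$-th coordinate of $(u_0^{k-1},0_k^{\ell-1})G$; as $G_{jj}\ne 0$, for every prescribed $z\in\mathcal{X}$ some choice of $u_0^{k-1}$ (adjust $u_j$) gives $x_j=z$.

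Next I would carry over the capacity bookkeeping from the proof of Theorem~\ref{thm:prime} and the Corollary preceding it: along almost every sample path $I(W_n^{(k)})-I(W_n)\to 0$, and since $I(W^{(k)})$ is the average over $u_0^{k-1}$ of conditional input--output mutual informations each at least $I(W_n)$ (one conditional output being a clean copy of $W_n$), every such conditional mutual information tends to $I_\infty$. Retaining only the outputs $y_k$ and $y_j$ cannot increase mutual information nor push it below $I(W_n)$, so for each fixed $u_0^{k-1}$ the two-output channel $W_n'(y_k,y_j\mid u_k):=W_n(y_k\mid u_k)\,W_n(y_j\mid\alpha u_k+x_j)$ again has capacity tending to $I_\infty$, whence $I(W_n')-I(W_n)\to 0$. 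Applying Lemma~\ref{lem:polar} with $\sigma(x)=x$, $\tau(x)=\alpha x+z$ and ranging over the finitely many $z\in\mathcal{X}$, one gets: for each $\delta\in(0,1/2)$ there is $m$ with $Z^{\tau\sigma^{-1}}_{x,x'}(W_n)\notin(\delta,1-\delta)$ for all $x,x'\in\mathcal{X}$, all $z\in\mathcal{X}$, and all $n\ge m$.

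Here primitivity enters. The affine permutation $\phi_z\colon x\mapsto\alpha x+z$ has the unique fixed point $x_0=z/(1-\alpha)$, and since $\alpha$ generates $\mathcal{X}^{\times}$ it permutes $\mathcal{X}\setminus\{x_0\}$ as a single $(q-1)$-cycle; as $z$ runs over $\mathcal{X}$ so does $x_0$. Specializing the previous display to $x=x_0$, $x'=v\ne x_0$, and using that $q-1$ divides $q!$, the formal $q!$-fold symmetrization collapses to the honest orbit average
\begin{equation*}
Z^{\phi_z}_{x_0,v}(W_n)=\frac{1}{q-1}\sum_{v'\in\mathcal{X}\setminus\{x_0\}}Z_{x_0,v'}(W_n)\notin(\delta,1-\delta).
\end{equation*}
Because every summand lies in $[0,1]$, this forces, for each $x_0\in\mathcal{X}$, either $Z_{x_0,v}(W_n)<(q-1)\delta$ for all $v\ne x_0$ or $Z_{x_0,v}(W_n)>1-(q-1)\delta$ for all $v\ne x_0$. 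If one $x_0$ obeyed the first alternative and some $x_1$ the second, then the single number $Z_{x_0,x_1}(W_n)=Z_{x_1,x_0}(W_n)$ would be both below $(q-1)\delta$ and above $1-(q-1)\delta$, which is impossible once $\delta<1/(2(q-1))$. Hence all of $\mathcal{X}$ lies on the same side, so for $n\ge m$ either $Z(W_n)<(q-1)\delta$ or $Z(W_n)>1-(q-1)\delta$.

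To finish, fix an almost-every sample path and suppose $I_\infty\in(0,1)$ on it; choose $\delta$ small enough that $\log\frac{q}{1+(q-1)^2\delta}>I_\infty$ and $2(q-1)(\log\mathrm{e})\sqrt{1-(1-(q-1)\delta)^2}<I_\infty$. By Lemma~\ref{lem:IZ}, in the first alternative $I(W_n)>\log\frac{q}{1+(q-1)^2\delta}$ and in the second $I(W_n)<2(q-1)(\log\mathrm{e})\sqrt{1-(1-(q-1)\delta)^2}$, so $I(W_n)$ stays bounded away from $I_\infty$ for all $n\ge m$, contradicting $I(W_n)\to I_\infty$; hence $I_\infty\in\{0,1\}$ almost surely. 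The step I expect to be the crux — the one where the primitivity hypothesis and the lower-triangular normalization are genuinely used — is precisely the passage from the symmetrized parameter $Z^{\tau\sigma^{-1}}_{x,x'}$ to an honest average over one full orbit, together with the freedom to place the fixed point $x_0$ anywhere in $\mathcal{X}$; everything else is as in Theorem~\ref{thm:prime} and the Corollary before it.
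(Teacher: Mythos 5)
Your proposal is correct, and its first half coincides with the paper's argument: as in Theorem~\ref{thm:prime}, you fix the lower-triangular normalization, isolate the factors $W(y_k\mid u_k)$ and $W(y_j\mid G_{kj}u_k+x_j)$ of $W^{(k)}$, note that every $u_0^{k-1}$ (hence every offset $z=x_j$) occurs with positive probability, and feed the resulting product channel into Lemma~\ref{lem:polar} with $\sigma=\mathrm{id}$, $\tau(x)=G_{kj}x+z$; your squeezing of the two-output restriction between $I(W_n)$ and the full conditional mutual information is just a more explicit rendering of what the paper asserts. Where you genuinely diverge is the endgame. The paper chains pairwise statements: from ``$Z_{x,x'}$ close to $1$ implies $Z_{0,G_{kj}(x'-x)}$ close to $1$'' it iterates powers of the primitive element to sweep out all nonzero differences, concluding that $Z_{x,x'}$ is close to $1$ iff $Z$ is, and it leaves the ``close to $0$'' side and the final passage to $I_\infty\in\{0,1\}$ implicit. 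You instead use only the \emph{stated} conclusion of Lemma~\ref{lem:polar} (the $q!$-fold symmetrized parameter), evaluated at the unique fixed point $x_0=z/(1-G_{kj})$ of the affine map: primitivity makes the orbit a single $(q-1)$-cycle, so the symmetrization collapses to the plain average of $Z_{x_0,v}$ over $v\ne x_0$, ranging $z$ moves $x_0$ over all of $\mathcal{X}$, and a short consistency argument ($Z_{x_0,x_1}=Z_{x_1,x_0}$, with $\delta<1/(2(q-1))$) forces all pairwise parameters to one side, after which Lemma~\ref{lem:IZ} yields the contradiction with $I_\infty\in(0,1)$. This buys a self-contained deduction from the lemma as stated (the paper's pairwise implication really lives inside the lemma's proof, or requires $\delta<1/q!$ to extract from the average) and an explicit treatment of both extremes, at the small cost of having to dispose of $q=2$ separately via Theorem~\ref{thm:prime} because the affine map is fixed-point-free there. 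The only soft spot—asserting almost-sure convergence of $I(W^{(B_1)\dotsm(B_n)(k)})-I(W^{(B_1)\dotsm(B_n)})$ to $0$ rather than arguing along the subsequence where $B_{n+1}=k$—is shared verbatim with the paper, so it is not a gap relative to it.
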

\begin{IEEEproof}
By applying Lemma~\ref{lem:polar}, one sees that
for almost every sequence $b_1,\dotsc,b_n,\dotsc$ of $0,\dotsc,\ell-1$, and for any $\delta\in(0,1/2)$,
there exists $m$ such that $Z^{\sigma}_{x,x'}(W^{(b_1)\dotsm(b_n)}) \notin (\delta,1-\delta)$
for any $x\in\mathcal{X}$, $x'\in\mathcal{X}$ and $n\ge m$
%
where $\sigma(x) = G_{kj}x + z$ for arbitrary $z\in\mathcal{X}$.
It suffices to show that
for any $x\in\mathcal{X}$ and $x'\in\mathcal{X}$, $x\ne x'$
$Z_{x,x'}(W^{(B_1)\dotsm(B_n)})$ is close to 1 if and only if
$Z(W^{(B_1)\dotsm(B_n)})$ is close to 1.
When $Z_{x,x'}(W^{(B_1)\dotsm(B_n)})$ is close to 1,
$Z_{0,G_{kj}(x'-x)}(W^{(B_1)\dotsm(B_n)})$ is close to 1. 
Hence,
$Z_{0,G_{kj}^i(x'-x)}(W^{(B_1)\dotsm(B_n)})$ is close to 1 for any $i\in\{0,\dotsc,q-2\}$.
Since $G_{kj}$ is a primitive element,
$Z_{0,x}(W^{(B_1)\dotsm(B_n)})$ is close to 1 for any $x\in\mathcal{X}$.
It completes the proof.
\end{IEEEproof}
In~\cite{5351487}, it is shown that
the channel polarization phenomenon occurs
by using a random kernel in which $G_{kj}$ is chosen uniformly from nonzero elements.
Corollary~\ref{cor:ff} says that a deterministic primitive element $G_{kj}$ is sufficient for the channel polarization phenomenon.

\section{Speed of polarization}
Ar{\i}kan and Telatar showed the speed of polarization~\cite{5205856}.
Korada, {\c S}a{\c s}o{\u g}lu, and Urbanke generalized it to any binary linear kernels~\cite{korada2009pcc}.
\begin{proposition}\label{prop:X}
Let $\{\hat{X}_n\in(0,1)\}_{n\in\mathbb{N}}$ be a random process satisfying the following properties.
\begin{enumerate}
\item $\hat{X}_n$ converges to $\hat{X}_\infty$ almost surely.
\item $\hat{X}_{n+1} \le \hat{c}\hat{X}_n^{\hat{D}_n}$ where $\{\hat{D}_n\ge 1\}_{n\in\mathbb{N}}$ are
 independent and identically distributed random variables,
and $\hat{c}$ is a constant.
\end{enumerate}
Then,
\begin{equation*}
\lim_{n\to\infty} P(\hat{X}_n < 2^{-2^{\beta n}}) = P(\hat{X}_\infty = 0)
\end{equation*}
for $\beta < \mathbb{E}[\log_2 \hat{D}_1]$ where $\mathbb{E}[\cdot]$ denotes an expectation.
Similarly,
let $\{\check{X}_n\in(0,1)\}_{n\in\mathbb{N}}$ be a random process satisfying the following properties.
\begin{enumerate}
\item $\check{X}_n$ converges to $\check{X}_\infty$ almost surely.
\item $\check{X}_{n+1} \ge \check{c}\check{X}_n^{\check{D}_n}$ where $\{\check{D}_n\ge 1\}_{n\in\mathbb{N}}$ are
 independent and identically distributed random variables,
and $\check{c}$ is a constant.
\end{enumerate}
Then,
\begin{equation*}
\lim_{n\to\infty} P(\check{X}_n < 2^{-2^{\beta n}}) = 0
\end{equation*}
for $\beta > \mathbb{E}[\log_2 \check{D}_1]$.
\end{proposition}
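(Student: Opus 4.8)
The plan is to pass to logarithms and combine the almost‑sure convergence in the first assumption with the strong law of large numbers for the i.i.d.\ sequence $\log_2\hat D_n$. Set $T_n:=-\log_2\hat X_n\in(0,\infty)$. Since replacing $\hat c$ by $1$ whenever $\hat c<1$ only weakens the hypothesis, we may assume $\hat c\ge1$, so the second assumption becomes $T_{n+1}\ge\hat D_nT_n-c$ with $c:=\log_2\hat c\ge0$, and $\{\hat X_n<2^{-2^{\beta n}}\}$ is exactly $\{T_n>2^{\beta n}\}$. Write $\mu:=\mathbb E[\log_2\hat D_1]>\beta$ and $L_n:=\sum_{i=1}^n\log_2\hat D_i$, which is nondecreasing with $L_n/n\to\mu$ almost surely. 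The bound $\limsup_nP(T_n>2^{\beta n})\le P(\hat X_\infty=0)$ is immediate: on $\{\hat X_\infty>0\}$ one has $\hat X_n\to\hat X_\infty>0$, so $T_n$ stays bounded while $2^{\beta n}\to\infty$, hence $\mathbb I\{T_n>2^{\beta n}\}\to0$ there and dominated convergence gives $P(\{\hat X_\infty>0\}\cap\{T_n>2^{\beta n}\})\to0$.

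The substantive half is $\liminf_nP(T_n>2^{\beta n})\ge P(\hat X_\infty=0)$; fix $\eta>0$. To control the additive error when iterating $T_{n+1}\ge\hat D_nT_n-c$, I would normalize: dividing by $P_n:=\prod_{i=m}^{n-1}\hat D_i$ gives, for all $n\ge m$, the estimate $T_n/P_n\ge T_m-\Sigma_m$, where $\Sigma_m:=c\sum_{j\ge m}2^{-(L_j-L_{m-1})}$. The summands are nonnegative and, by the SLLN, decay geometrically, so $\Sigma_m<\infty$ almost surely; crucially, $\Sigma_m$ has the same distribution as $\Sigma_1$ for every $m$ by stationarity of $(\hat D_i)$, so one can fix a single level $s\ge2$ with $P(\Sigma_m>s)<\eta/3$ for all $m$. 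Next choose $\varepsilon>0$ with $-\log_2\varepsilon\ge2s$; since $\hat X_n\to\hat X_\infty$, the increasing events $E_m:=\{\hat X_k<\varepsilon\text{ for all }k\ge m\}$ satisfy $\bigcup_mE_m\supseteq\{\hat X_\infty=0\}$, so pick $m_1$ with $P(E_{m_1})\ge P(\hat X_\infty=0)-\eta/3$. On $E_{m_1}\cap\{\Sigma_{m_1}\le s\}$ one then has $T_{m_1}\ge2s$ and hence $T_n\ge s\,P_n=s\,2^{L_{n-1}-L_{m_1-1}}$; on the probability‑one event where $(L_{n-1}-L_{m_1-1})/n\to\mu>\beta$, this exceeds $2^{\beta n}$ for every sufficiently large $n$. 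Intersecting these three events, whose probability is at least $P(\hat X_\infty=0)-\eta$, and using continuity of the measure from below, one gets $\liminf_nP(T_n>2^{\beta n})\ge P(\hat X_\infty=0)-\eta$; letting $\eta\downarrow0$ finishes this half. The order of the choices — first $s$, then $\varepsilon$, then $m_1$ — is what keeps them non‑circular.

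The statement about $\check X_n$ is the same idea, one‑sided and shorter. With $\check T_n:=-\log_2\check X_n$ and $c'':=\max\{0,-\log_2\check c\}$, the recursion $\check X_{n+1}\ge\check c\check X_n^{\check D_n}$ gives $\check T_{n+1}\le\check D_n\check T_n+c''$, and unrolling, bounding each partial product by the full one (valid since $\check D_i\ge1$), yields $\check T_n\le(\check T_1+c''n)\,2^{L'_{n-1}}$ with $L'_k:=\sum_{i=1}^k\log_2\check D_i$. Choosing $\mathbb E[\log_2\check D_1]<\beta''<\beta$, on the SLLN event $L'_{n-1}\le\beta''n$ for all large $n$, so $\check T_n\le(\check T_1+c''n)\,2^{\beta''n}<2^{\beta n}$ eventually (a polynomial against a strictly smaller exponential, using $\check T_1<\infty$), i.e.\ $\mathbb I\{\check X_n<2^{-2^{\beta n}}\}\to0$ almost surely, and dominated convergence gives the limit $0$.

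I expect the main obstacle to be precisely this control of the additive constant $\hat c$ in the lower‑tail direction: a naive unrolling of $T_{n+1}\ge\hat D_nT_n-c$ leaves a subtracted term of size roughly $c(n-m_1)$ that eventually overwhelms the constant $T_{m_1}$, so one must instead track the normalized process $T_n/\prod\hat D_i$ and combine the a.s.\ summability of the error series (from the SLLN) with its stationarity in $m$ in order to pick the trapping level $s$ uniformly.
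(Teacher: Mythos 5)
Your argument is correct, but note that the paper itself offers no proof of Proposition~\ref{prop:X}: it is quoted from Ar{\i}kan--Telatar \cite{5205856} and Korada--{\c S}a{\c s}o{\u g}lu--Urbanke \cite{korada2009pcc}, so the comparison is with those proofs rather than with anything in this text. The cited proofs proceed by a two-stage bootstrapping: first use almost-sure convergence to trap the process below a small threshold with probability close to $P(\hat X_\infty=0)$, then invoke the SLLN on the exponents to show the process decays exponentially, and only then re-run the recursion on top of the exponential bound to upgrade it to the doubly-exponential rate $2^{-2^{\beta n}}$. You avoid the bootstrap entirely by passing to $T_n=-\log_2\hat X_n$ and normalizing by the running product $\prod\hat D_i$, which turns the awkward recursion $T_{n+1}\ge \hat D_nT_n-c$ into a telescoping bound $T_n/P_n\ge T_m-\Sigma_m$ with an error series $\Sigma_m$ that is a.s.\ finite (by the SLLN) and, being a fixed functional of the i.i.d.\ tail $(\hat D_m,\hat D_{m+1},\dots)$, has the same law for every $m$ --- that distributional invariance is exactly what lets you pick the trapping level $s$ before $\varepsilon$ and $m_1$, and it is the step that replaces the bootstrap. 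The telescoping identity checks out ($T_n/P_n\ge T_m-c\sum_{j=m}^{n-1}2^{-(L_j-L_{m-1})}$), the union bound plus Fatou/continuity-from-below step is sound, and the $\check X$ half (unrolling $\check T_{n+1}\le\check D_n\check T_n+c''$ and comparing a polynomial factor against $2^{(\beta-\beta'')n}$) is fine and does not even need the convergence hypothesis. What your route buys is a shorter, single-pass argument; what the bootstrap buys in \cite{5205856}, \cite{tanaka2010rre} is finer, non-asymptotic control that extends to rate-dependent refinements. Two shared, implicit caveats (also implicit in the paper's statement): the limsup direction and the finiteness of $\Sigma_m$ require $\beta>0$, equivalently $\mathbb{E}[\log_2\hat D_1]>0$ (for $\beta\le 0$ the claim is false as stated, e.g.\ for a constant process with $\hat D_n\equiv 1$), and $\check c>0$ must be assumed for $c''$ to be defined; it would be worth saying so explicitly.
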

Note that the above proposition can straightforwardly be extended to 
include the rate dependence~\cite{tanaka2010rre}.

In order to apply Proposition~\ref{prop:X} to $Z_\text{max}(W^{(B_1)\dotsm(B_n)})$ and $Z_\text{min}(W^{(B_1)\dotsm(B_n)})$
as $\hat{X}_n$ and $\check{X}_n$, respectively,
the second conditions have to be proven.
In the argument of~\cite{korada2009pcc}, partial distance of a kernel corresponds to
the random variables $\hat{D}_n$ and $\check{D}_n$ in Proposition~\ref{prop:X}.
\begin{definition}
Partial distance of a kernel $g: \mathcal{X}^\ell\to\mathcal{X}^\ell$ is defined as
\begin{multline*}
D_{x,x'}^{(i)}(u_0^{i-1})\\
:= \min_{v_{i+1}^{\ell-1}, w_{i+1}^{\ell-1}} 
d(g(u_0^{i-1},x,v_{i+1}^{\ell-1}),g(u_0^{i-1},x',w_{i+1}^{\ell-1}))
\end{multline*}
where $d(a,b)$ denotes the Hamming distance between $a\in\mathcal{X}^\ell$ and $b\in\mathcal{X}^\ell$.
\end{definition}
We also use the following quantities.
\begin{align*}
D_{x,x'}^{(i)} &:= \min_{u_0^{i-1}} D_{x,x'}^{(i)}(u_0^{i-1})\\
D_\text{max}^{(i)} &:= \max_{x\in\mathcal{X},x'\in\mathcal{X}} D_{x,x'}^{(i)}\\
D_\text{min}^{(i)} &:= \min_{\substack{x\in\mathcal{X},x'\in\mathcal{X}\\x\ne x'}} D_{x,x'}^{(i)}.
\end{align*}
When $g$ is linear, $D^{(i)}_{x,x'}(u_0^{i-1})$ does not depend on $x$, $x'$ or $u_0^{i-1}$,
in which case
we will use the notation $D^{(i)}$ instead of $D^{(i)}_{x,x'}(u_0^{i-1})$.

From Lemma~\ref{lem:ZD} in the appendix, the following lemma is obtained.
\begin{lemma}
For $i\in\{0,\dotsc,\ell-1\}$,
\begin{equation*}
\frac1{q^{2\ell-2-i}}Z_\text{min}(W)^{D_{x,x'}^{(i)}} \le Z_{x,x'}(W^{(i)}_{\ell}) \le q^{\ell-1-i}Z_\text{max}(W)^{D_{x,x'}^{(i)}}
\end{equation*}
\end{lemma}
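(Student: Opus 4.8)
The plan is to relate the Bhattacharyya parameters $Z_{x,x'}(W^{(i)}_\ell)$ to the partial distances $D^{(i)}_{x,x'}$ via Lemma~\ref{lem:ZD} in the appendix, which presumably gives a single-step multiplicativity estimate for the Bhattacharyya parameter under the basic channel transform. The definition of $W^{(i)}_{u_0^{i-1}}$ shows that the transition probabilities factor as a sum over $u_{i+1}^{\ell-1}$ of a product $\prod_j W(y_j \mid g(u_0^{\ell-1})_j)$, so the output $y_0^{\ell-1}$ of $W^{(i)}_{u_0^{i-1}}$ is (after fixing the frozen coordinates) a product channel whose components are copies of $W$ evaluated at the coordinates of $g$. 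First I would write $Z_{x,x'}(W^{(i)}_{u_0^{i-1}})$ as a sum over $(y_0^{\ell-1})$ of $\sqrt{W^{(i)}_{u_0^{i-1}}(y_0^{\ell-1}\mid x)\,W^{(i)}_{u_0^{i-1}}(y_0^{\ell-1}\mid x')}$, and bound it above and below by manipulating the sums over the suffix vectors $u_{i+1}^{\ell-1}, w_{i+1}^{\ell-1}$.

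For the \textbf{upper bound}: the two conditional distributions being compared are mixtures (averaged over $u_{i+1}^{\ell-1}$, resp. $w_{i+1}^{\ell-1}$) of product channels. Applying Cauchy--Schwarz / the triangle-type inequality for the Bhattacharyya functional (as in Lemma~\ref{lem:ZD}) lets me pull the square root inside and reduce to bounding each term $Z(W)$ raised to the Hamming distance $d(g(u_0^{i-1},x,v_{i+1}^{\ell-1}),g(u_0^{i-1},x',w_{i+1}^{\ell-1}))$ between the corresponding codewords. Each such distance is at least $D^{(i)}_{x,x'}$ by definition, and coordinates where the two codewords agree contribute a factor bounded by $1$, while coordinates where one of the suffixes is "free" (the $j>i$ coordinates that get summed out) contribute the factor $q^{\ell-1-i}$ after accounting for the normalization $1/q^{\ell-1-i}$; coordinates of disagreement contribute $Z_{x,x'}(W)\le Z_\text{max}(W)$. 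Collecting these gives $Z_{x,x'}(W^{(i)}_\ell)\le q^{\ell-1-i} Z_\text{max}(W)^{D^{(i)}_{x,x'}}$.

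For the \textbf{lower bound}: I would instead keep only a single judiciously chosen pair $(v_{i+1}^{\ell-1}, w_{i+1}^{\ell-1})$ achieving the minimum Hamming distance $D^{(i)}_{x,x'}$ in the definition of the partial distance, drop all other nonnegative terms, and use that $\sqrt{W(\cdot\mid a)W(\cdot\mid a)}$ summed over the output is $1$ at agreeing coordinates and equals $Z_{a,b}(W)\ge Z_\text{min}(W)$ at the (at most $D^{(i)}_{x,x'}$, but we need the exact count) disagreeing coordinates; handling the mixture denominators $1/q^{\ell-1-i}$ for both distributions produces the factor $1/q^{2\ell-2-i}$. The main obstacle will be bookkeeping the normalization constants and the "free" coordinates correctly, i.e. making sure the power of $q$ on each side is exactly $q^{\ell-1-i}$ (upper) and $q^{-(2\ell-2-i)}$ (lower), and verifying that in the lower bound the number of disagreeing coordinates can be taken to be \emph{at most} $D^{(i)}_{x,x'}$ so that, since $Z_\text{min}(W)\le 1$, raising to the larger exponent $D^{(i)}_{x,x'}$ only decreases the bound and the stated inequality still holds. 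This is precisely the kind of estimate packaged in Lemma~\ref{lem:ZD}, so once that lemma is invoked the remaining work is the combinatorial accounting just described.
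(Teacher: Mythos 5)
Your route is the same as the paper's: the displayed inequalities are meant to follow from Lemma~\ref{lem:ZD}, whose proof (Jensen/concavity of the square root plus factorization of $\sum_y\sqrt{W^\ell W^\ell}$ into per-coordinate Bhattacharyya factors, then either summing all suffix pairs or keeping a single minimizing pair) you essentially re-sketch correctly. The genuine gap is in the passage from the fixed-prefix objects of Lemma~\ref{lem:ZD} to the objects in the statement. Lemma~\ref{lem:ZD} bounds $Z_{x,x'}(W^{(i)}_{u_0^{i-1}})$ in terms of the prefix-dependent distance $D^{(i)}_{x,x'}(u_0^{i-1})$, whereas the statement is about $W^{(i)}$ (written $W^{(i)}_\ell$), whose output contains $u_0^{i-1}$, and about $D^{(i)}_{x,x'}=\min_{u_0^{i-1}}D^{(i)}_{x,x'}(u_0^{i-1})$. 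You work throughout with a fixed $u_0^{i-1}$ and never make the link
\begin{equation*}
Z_{x,x'}(W^{(i)})=\frac1{q^{i}}\sum_{u_0^{i-1}}Z_{x,x'}(W^{(i)}_{u_0^{i-1}}),
\end{equation*}
which follows from $W^{(i)}(y_0^{\ell-1},u_0^{i-1}\mid x)=q^{-i}W^{(i)}_{u_0^{i-1}}(y_0^{\ell-1}\mid x)$. With this identity the upper bound is immediate (average the fixed-prefix bounds and use $Z_\text{max}(W)\le1$ together with $D^{(i)}_{x,x'}(u_0^{i-1})\ge D^{(i)}_{x,x'}$), and your sketch of that half is fine.

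For the lower bound your bookkeeping is wrong in a way that hides the missing step. You attribute the constant $q^{-(2\ell-2-i)}$ to ``the mixture denominators $1/q^{\ell-1-i}$ for both distributions,'' but those account only for $q^{-2(\ell-1-i)}=q^{-(2\ell-2-2i)}$; the remaining factor $q^{-i}$ arises precisely from the averaging identity above, by discarding all but one prefix $u_0^{i-1}$ that attains the minimum $D^{(i)}_{x,x'}$. This choice is not optional: for an arbitrary fixed prefix, Lemma~\ref{lem:ZD} gives $Z_{x,x'}(W^{(i)}_{u_0^{i-1}})\ge q^{-2(\ell-1-i)}Z_\text{min}(W)^{D^{(i)}_{x,x'}(u_0^{i-1})}$, and since $Z_\text{min}(W)\le1$ and $D^{(i)}_{x,x'}(u_0^{i-1})\ge D^{(i)}_{x,x'}$, this does \emph{not} imply the same bound with exponent $D^{(i)}_{x,x'}$. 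Your remark that the number of disagreeing coordinates ``can be taken to be at most $D^{(i)}_{x,x'}$'' is true only when $u_0^{i-1}$ is a minimizing prefix (in which case it equals $D^{(i)}_{x,x'}$). So to complete the proof you must (i) state the averaging identity, (ii) get the upper bound by averaging, and (iii) get the lower bound by keeping only a minimizing prefix, which is exactly where the asymmetric exponent $2\ell-2-i$ (as opposed to $2(\ell-1-i)$ in Lemma~\ref{lem:ZD}) comes from.
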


\begin{corollary}\label{cor:Zminmax}
For $i\in\{0,\dotsc,\ell-1\}$,
\begin{align*}
Z_\text{max}(W^{(i)}) &\le q^{\ell-1-i}Z_\text{max}(W)^{D^{(i)}_\text{min}}\\
\frac1{q^{2\ell-2-i}}Z_\text{min}(W)^{D^{(i)}_\text{max}} &\le Z_\text{min}(W^{(i)}).
\end{align*}
\end{corollary}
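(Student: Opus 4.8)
The plan is to read off both inequalities directly from the preceding lemma: fix a pair $(x,x')\in\mathcal{X}^2$, apply the per-pair bound, replace the partial distance by the appropriate extremal value, and then optimize over the pair. The one preliminary fact I would record first is that $Z_{x,x'}(W)\in[0,1]$ for every channel $W$ and every pair $(x,x')$: by Cauchy--Schwarz, $Z_{x,x'}(W)=\sum_y\sqrt{W(y\mid x)W(y\mid x')}\le\sqrt{\sum_y W(y\mid x)}\,\sqrt{\sum_y W(y\mid x')}=1$. In particular $Z_\text{max}(W),Z_\text{min}(W)\in[0,1]$, so for any base $z\in[0,1]$ the map $t\mapsto z^{t}$ is non-increasing on $[0,\infty)$ (with the convention $z^{0}=1$); this monotonicity is exactly what lets me trade a partial distance for its extremal value in the correct direction.

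For the upper bound I would fix $x\ne x'$ and start from the lemma's estimate $Z_{x,x'}(W^{(i)})\le q^{\ell-1-i}Z_\text{max}(W)^{D^{(i)}_{x,x'}}$. Since $D^{(i)}_{x,x'}\ge D^{(i)}_\text{min}$ by the definition of $D^{(i)}_\text{min}$, and $Z_\text{max}(W)\in[0,1]$, lowering the exponent from $D^{(i)}_{x,x'}$ to $D^{(i)}_\text{min}$ only enlarges the right-hand side, so $Z_{x,x'}(W^{(i)})\le q^{\ell-1-i}Z_\text{max}(W)^{D^{(i)}_\text{min}}$. Taking the maximum over all $x\ne x'$ yields $Z_\text{max}(W^{(i)})\le q^{\ell-1-i}Z_\text{max}(W)^{D^{(i)}_\text{min}}$.

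For the lower bound I would fix an arbitrary pair $(x,x')$. If $x\ne x'$, the lemma gives $Z_{x,x'}(W^{(i)})\ge q^{-(2\ell-2-i)}Z_\text{min}(W)^{D^{(i)}_{x,x'}}$, and since $D^{(i)}_{x,x'}\le D^{(i)}_\text{max}$ while $Z_\text{min}(W)\in[0,1]$, raising the exponent to $D^{(i)}_\text{max}$ only decreases the right-hand side, so $Z_{x,x'}(W^{(i)})\ge q^{-(2\ell-2-i)}Z_\text{min}(W)^{D^{(i)}_\text{max}}$. If $x=x'$, then $Z_{x,x}(W^{(i)})=1$, since it equals the total probability $\sum_{y_0^{\ell-1},u_0^{i-1}}W^{(i)}(y_0^{\ell-1},u_0^{i-1}\mid x)=1$, and this already dominates $q^{-(2\ell-2-i)}Z_\text{min}(W)^{D^{(i)}_\text{max}}$, which is at most $1$ because $2\ell-2-i\ge 0$ for $i\in\{0,\dots,\ell-1\}$. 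Taking the minimum over all pairs $(x,x')$ then gives $Z_\text{min}(W^{(i)})\ge q^{-(2\ell-2-i)}Z_\text{min}(W)^{D^{(i)}_\text{max}}$.

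There is no substantial obstacle here; the corollary is bookkeeping on top of the preceding lemma. The only two points I would be careful about are (i) confirming that the relevant Bhattacharyya quantities lie in $[0,1]$, so that the monotonicity of $z\mapsto z^{t}$ is invoked in the right direction when swapping $D^{(i)}_{x,x'}$ for $D^{(i)}_\text{min}$ or $D^{(i)}_\text{max}$; and (ii) the degenerate pair $x=x'$ that enters the definition of $Z_\text{min}$, which must be handled separately but only ever makes the lower bound easier.
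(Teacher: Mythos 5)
Your proposal is correct and follows exactly the route the paper intends: the corollary is stated as an immediate consequence of the preceding lemma, and your argument simply spells out the implicit steps (monotonicity of $z\mapsto z^{t}$ for $z\in[0,1]$ when replacing $D^{(i)}_{x,x'}$ by $D^{(i)}_\text{min}$ or $D^{(i)}_\text{max}$, then optimizing over pairs, with the degenerate pair $x=x'$ in $Z_\text{min}$ handled separately). No discrepancy with the paper's reasoning.
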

From Proposition~\ref{prop:X} and Corollary~\ref{cor:Zminmax}, the following theorem is obtained.
\begin{theorem}
Assume $P(I_\infty(W)\in\{0,1\})=1$.
It holds
\begin{equation*}
\lim_{n\to\infty} P(Z(W^{(B_1)\dotsc(B_n)}) < 2^{-\ell^{\beta n}}) = I(W)
\end{equation*}
for $\beta < (1/\ell) \sum_i \log_\ell D^{(i)}_\text{min}$.

When $Z_\text{min}(W)>0$,
\begin{equation*}
\lim_{n\to\infty} P(Z(W^{(B_1)\dotsc(B_n)}) < 2^{-\ell^{\beta n}}) = 0
\end{equation*}
for $\beta > (1/\ell) \sum_i \log_\ell D^{(i)}_\text{max}$.
\end{theorem}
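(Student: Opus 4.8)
The plan is to apply Proposition~\ref{prop:X} to the two processes $\hat X_n := Z_\text{max}(W^{(B_1)\dotsm(B_n)})$ and $\check X_n := Z_\text{min}(W^{(B_1)\dotsm(B_n)})$, and then to squeeze $Z(W^{(B_1)\dotsm(B_n)})$ between them: for every $q$-ary DMC $V$ one has $Z_\text{min}(V)\le Z(V)\le Z_\text{max}(V)$, since $Z(V)$ is the average of the off-diagonal Bhattacharyya parameters $Z_{x,x'}(V)\in[0,1]$ while the diagonal ones equal $1$. The lower-bound half of the theorem will come from the $\hat X_n$ side and the upper-bound half (which is where the hypothesis $Z_\text{min}(W)>0$ enters) from the $\check X_n$ side.

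First I would check hypothesis~1 of Proposition~\ref{prop:X}. By assumption and Lemma~\ref{lem:mc}, $I(W^{(B_1)\dotsm(B_n)})$ converges almost surely to $I_\infty\in\{0,1\}$; substituting $I_\infty=1$ and $I_\infty=0$ into the inequalities of Lemma~\ref{lem:IZ} forces $Z(W^{(B_1)\dotsm(B_n)})\to Z_\infty$, with $Z_\infty=0$ on $\{I_\infty=1\}$ and $Z_\infty=1$ on $\{I_\infty=0\}$. Since each $Z_{x,x'}(W^{(B_1)\dotsm(B_n)})$ lies in $[0,1]$ and their average tends to $0$ (resp.\ $1$), every one of them tends to $0$ (resp.\ $1$), so $\hat X_n$ and $\check X_n$ also converge to $Z_\infty\in\{0,1\}$ almost surely. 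Invariance of $\mathbb{E}[I(W^{(B_1)\dotsm(B_n)})]$ under the transform together with bounded convergence then gives $P(I_\infty=1)=\mathbb{E}[I_\infty]=I(W)$, hence $P(\hat X_\infty=0)=P(Z_\infty=0)=I(W)$. A channel that becomes noiseless or pure noise at a finite stage stays that way, so I would simply discard the corresponding events, on which $\hat X_n$ or $\check X_n$ momentarily leaves $(0,1)$ and the asserted limits hold trivially.

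Next I would establish hypothesis~2. Applying Corollary~\ref{cor:Zminmax} to the channel $W^{(B_1)\dotsm(B_n)}$ with sub-channel index $i=B_{n+1}$ gives $\hat X_{n+1}\le q^{\ell-1}\hat X_n^{D^{(B_{n+1})}_\text{min}}$ and, when $Z_\text{min}(W)>0$ (which by the same bound propagates so that $\check X_n>0$ for all $n$), $\check X_{n+1}\ge q^{2-2\ell}\check X_n^{D^{(B_{n+1})}_\text{max}}$. Because $g$ is a bijection, two inputs differing in coordinate $i$ have distinct images, hence Hamming distance at least $1$, so $D^{(i)}_\text{min}\ge1$ and a fortiori $D^{(i)}_\text{max}\ge1$ for every $i$; and since the $B_i$ are i.i.d.\ uniform on $\{0,\dotsc,\ell-1\}$, the exponents $\{D^{(B_{n+1})}_\text{min}\}_n$ and $\{D^{(B_{n+1})}_\text{max}\}_n$ are i.i.d., with $\mathbb{E}[\log_2 D^{(B_1)}_\text{min}]=\tfrac1\ell\sum_i\log_2 D^{(i)}_\text{min}$ and $\mathbb{E}[\log_2 D^{(B_1)}_\text{max}]=\tfrac1\ell\sum_i\log_2 D^{(i)}_\text{max}$. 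Proposition~\ref{prop:X} then yields $\lim_n P(\hat X_n<2^{-2^{\beta' n}})=I(W)$ for $\beta'<\tfrac1\ell\sum_i\log_2 D^{(i)}_\text{min}$, and $\lim_n P(\check X_n<2^{-2^{\beta' n}})=0$ for $\beta'>\tfrac1\ell\sum_i\log_2 D^{(i)}_\text{max}$.

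Finally I would change the base of the exponent and assemble the sandwich. Writing $2^{-2^{\beta' n}}=2^{-\ell^{\beta n}}$ with $\beta=\beta'/\log_2\ell$ turns $\beta'<\tfrac1\ell\sum_i\log_2 D^{(i)}_\text{min}$ into $\beta<\tfrac1\ell\sum_i\log_\ell D^{(i)}_\text{min}$, and similarly for the max version. From $Z\le Z_\text{max}$ one gets $\liminf_n P(Z(W^{(B_1)\dotsm(B_n)})<2^{-\ell^{\beta n}})\ge I(W)$; for the matching upper bound, almost-sure convergence of $Z(W^{(B_1)\dotsm(B_n)})$ to $Z_\infty\in\{0,1\}$ gives $P(Z(W^{(B_1)\dotsm(B_n)})<\epsilon)\to I(W)$ for each fixed $\epsilon\in(0,1)$, and $2^{-\ell^{\beta n}}<\epsilon$ eventually, so $\limsup_n P(Z(W^{(B_1)\dotsm(B_n)})<2^{-\ell^{\beta n}})\le I(W)$; this gives the first claim, while the second follows from $Z\ge Z_\text{min}$ and the $\check X_n$ limit. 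I expect the only genuinely delicate points to be the base conversion between the $2^{-2^{\beta' n}}$ scaling of Proposition~\ref{prop:X} and the $2^{-\ell^{\beta n}}$ scaling of the statement, and checking that $Z_\text{max}$ and $Z_\text{min}$—not merely $I$ or $Z$—converge almost surely so that hypothesis~1 of Proposition~\ref{prop:X} is actually satisfied; the remaining manipulations are routine.
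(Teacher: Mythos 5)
Your proposal is correct and is exactly the route the paper intends: the paper gives no written proof beyond the remark that the theorem follows from Proposition~\ref{prop:X} and Corollary~\ref{cor:Zminmax}, and your argument simply fills in that outline (verifying almost-sure convergence of $Z_\text{max}$ and $Z_\text{min}$ via Lemmas~\ref{lem:mc} and~\ref{lem:IZ}, obtaining hypothesis~2 from Corollary~\ref{cor:Zminmax} with $i=B_{n+1}$, converting the base, and sandwiching $Z$ between $Z_\text{min}$ and $Z_\text{max}$). The only caveat, inherited from the statement itself rather than from your argument, is that your fixed-$\epsilon$ step for the $\limsup$ tacitly assumes $\beta>0$, which is the regime of interest.
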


When $g$ is a linear kernel represented by a square matrix $G$, 
$(1/\ell)\sum_i \log_\ell D^{(i)}$ is called the exponent of $G$~\cite{korada2009pcc}.

\begin{example}
Assume that $\mathcal{X}$ is a field and that $\alpha\in\mathcal{X}$ is a primitive element.
For a non-zero element $\gamma\in\mathcal{X}$, let
\begin{equation*}
G_\text{RS}(q)=
\begin{bmatrix}
1&1&\dotsc&1&1&0\\
\alpha^{(q-2)(q-2)}&\alpha^{(q-3)(q-2)}&\dotsc&\alpha^{q-2}&1&0\\
\alpha^{(q-2)(q-3)}&\alpha^{(q-3)(q-3)}&\dotsc&\alpha^{q-3}&1&0\\
\vdots&\vdots&\dotsc&\vdots&\vdots&\vdots\\
\alpha^{q-2}&\alpha^{q-3}&\dotsc& \alpha&1&0\\
1&1& \dotsc& 1&1&\gamma\\
\end{bmatrix}.
\end{equation*}
Since $G_\text{RS}(2)=\begin{bmatrix}1&0\\1&1\end{bmatrix}$,
$G_\text{RS}(q)$ can be regarded as a generalization of Ar{\i}kan's original matrix.
The relation between binary polar codes and binary Reed-Muller codes~\cite{5075875}
also holds for $q$-ary polar codes using $G_\text{RS}(q)$ and $q$-ary Reed-Muller codes.
From Theorem~\ref{thm:prime}, the channel polarization phenomenon occurs on $G_\text{RS}(q)$ for any $\gamma\ne0$ when $q$ is a prime.
When $\gamma$ is a primitive element, from Corollary~\ref{cor:ff},
the channel polarization phenomenon occurs on $G_\text{RS}(q)$ for any prime power $q$.
We call $G_\text{RS}(q)$ the Reed-Solomon kernel
%
since the submatrix which consists of $i$-th row to $(q-1)$-th row of $G_\text{RS}(q)$
is a generator matrix of a generalized Reed-Solomon code,
which is a maximum distance separable code i.e., $D^{(i)} = i+1$.
Hence, the exponent of $G_\text{RS}(q)$ is $\frac1\ell \sum_i \log_\ell (i+1)$ where $\ell=q$.
Since
\begin{equation*}
\frac1\ell \sum_{i=0}^{\ell-1}\log_\ell (i+1) 
\ge
\frac1{\ell\log_\mathrm{e}\ell}\int_1^\ell \log_\mathrm{e} x \mathrm{d}x
= 1- \frac{\ell-1}{\ell\log_\mathrm{e}\ell}
\end{equation*}
the exponent of the Reed-Solomon kernel tends to 1 as $\ell=q$ tends to infinity.
When $q=2^2$, the exponent of the Reed-Solomon kernel is $\log_\mathrm{e} 24 / (4\log_\mathrm{e} 4)\approx 0.57312$.
In Ar{\i}kan's original work, the exponent of the $2\times 2$ matrix is $0.5$~\cite{5205856}.
In~\cite{korada2009pcc}, Korada, {\c S}a{\c s}o{\u g}lu, and Urbanke showed that by using large kernels, the exponent can be improved,
and found a matrix of size 16 whose exponent is about 0.51828.
The above-mentioned Reed-Solomon kernel with $q=2^2$ is reasonably small and simple but has
a larger exponent than binary linear kernels of small size.
This demonstrates the usefulness of considering $q$-ary rather than binary channels.
For $q$-ary DMC where $q$ is not a prime, it can be decomposed to subchannels of input sizes of prime numbers~\cite{5351487}
by using the method of multilevel coding~\cite{1055718}.
The above example shows that 
when $q$ is a power of a prime,
without the decomposition of $q$-ary DMC,
asymptotically better coding scheme can be constructed 
by using $q$-ary polar codes with $G_\text{RS}(q)$.
\end{example}

\section{Conclusion}
The channel polarization phenomenon on $q$-ary channels has been considered.
We give several sufficient conditions on kernels under which the channel polarization phenomenon occurs.
We also show an explicit construction with a $q$-ary linear kernel $G_\text{RS}(q)$ for $q$ being a power of a prime.
The exponent of $G_\text{RS}(q)$ is $\log_\mathrm{e}(q!)/(q\log_\mathrm{e}q)$ which is larger than the exponent of binary matrices of small size
even if $q=4$.
%
Our discussion includes channel polarization on non-linear kernels as well.
It is known that non-linear binary codes may have a larger minimum distance than linear binary codes,
e.g. the Nordstrom-Robinson codes~\cite{macwilliams1988tec}.
This implies possibility that there exists a non-linear kernel with a larger exponent than any linear kernel of the same size.

\appendix
\begin{lemma}\label{lem:ZD}
\begin{multline*}
\frac1{q^{2(\ell-1-i)}}Z_\text{min}(W)^{D_{x,x'}^{(i)}(u_0^{i-1})}\\
\le Z_{x,x'}(W^{(i)}_{u_0^{i-1}}) \le q^{\ell-1-i}Z_\text{max}(W)^{D_{x,x'}^{(i)}(u_0^{i-1})}
\end{multline*}
\end{lemma}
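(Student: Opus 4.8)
The plan is to unfold the definition of $W^{(i)}_{u_0^{i-1}}$ inside the Bhattacharyya sum and reduce everything to one elementary observation: for any two fixed vectors $a,b\in\mathcal{X}^\ell$,
\[
\sum_{y_0^{\ell-1}\in\mathcal{Y}^\ell}\ \prod_{j=0}^{\ell-1}\sqrt{W(y_j\mid a_j)\,W(y_j\mid b_j)}=\prod_{j=0}^{\ell-1} Z_{a_j,b_j}(W),
\]
since the left-hand side factorizes over the coordinates $y_j$. Because $Z_{a_j,a_j}(W)=1$, while on coordinates with $a_j\ne b_j$ one has $Z_\text{min}(W)\le Z_{a_j,b_j}(W)\le Z_\text{max}(W)\le1$ (the bound $Z_{a_j,b_j}(W)\le1$ being Cauchy--Schwarz), this product lies between $Z_\text{min}(W)^{d(a,b)}$ and $Z_\text{max}(W)^{d(a,b)}$. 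I would record this as a preliminary claim and then treat the two inequalities of the lemma separately.

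For the upper bound, expand $W^{(i)}_{u_0^{i-1}}(y_0^{\ell-1}\mid x)$ as $q^{-(\ell-1-i)}\sum_{v_{i+1}^{\ell-1}}\prod_{j} W(y_j\mid g(u_0^{i-1},x,v_{i+1}^{\ell-1})_j)$ and similarly for $x'$ with a dummy $w_{i+1}^{\ell-1}$; apply subadditivity of $t\mapsto\sqrt{t}$, i.e.\ $\sqrt{\sum_v A_v}\le\sum_v\sqrt{A_v}$, to both factors; and interchange the resulting double sum over $(v_{i+1}^{\ell-1},w_{i+1}^{\ell-1})$ with the sum over $y_0^{\ell-1}$. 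By the preliminary claim each of the $q^{2(\ell-1-i)}$ resulting terms is at most $Z_\text{max}(W)$ raised to the Hamming distance between $g(u_0^{i-1},x,v_{i+1}^{\ell-1})$ and $g(u_0^{i-1},x',w_{i+1}^{\ell-1})$, hence at most $Z_\text{max}(W)^{D^{(i)}_{x,x'}(u_0^{i-1})}$, since $Z_\text{max}(W)\le1$ and that distance is $\ge D^{(i)}_{x,x'}(u_0^{i-1})$ by definition of partial distance. Multiplying the number of terms by the normalization $q^{-(\ell-1-i)}$ gives exactly the stated factor $q^{\ell-1-i}$.

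For the lower bound I would \emph{not} use subadditivity, but instead keep a single summand: pick a pair $(v^\star,w^\star)$ attaining the minimum in the definition of $D^{(i)}_{x,x'}(u_0^{i-1})$ --- it exists because $\mathcal{X}$ is finite --- and discard the other nonnegative terms, so that $W^{(i)}_{u_0^{i-1}}(y_0^{\ell-1}\mid x)\ge q^{-(\ell-1-i)}\prod_j W(y_j\mid g(u_0^{i-1},x,v^\star)_j)$ and likewise for $x'$. Substituting into the Bhattacharyya sum and invoking the preliminary claim --- now with the distance equal to $D^{(i)}_{x,x'}(u_0^{i-1})$ exactly --- yields $Z_{x,x'}(W^{(i)}_{u_0^{i-1}})\ge q^{-(\ell-1-i)}Z_\text{min}(W)^{D^{(i)}_{x,x'}(u_0^{i-1})}$, which dominates the claimed $q^{-2(\ell-1-i)}Z_\text{min}(W)^{D^{(i)}_{x,x'}(u_0^{i-1})}$. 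I do not expect a genuine obstacle here; the one point needing care is the \emph{direction} of the exponent monotonicity: since $Z_\text{max}(W)\le1$ and $Z_\text{min}(W)\le1$, the maps $t\mapsto Z_\text{max}(W)^t$ and $t\mapsto Z_\text{min}(W)^t$ are nonincreasing, so in the upper bound one may legitimately replace the true Hamming distance by the smaller quantity $D^{(i)}_{x,x'}(u_0^{i-1})$, whereas in the lower bound the inequality that the distance is $\ge D^{(i)}_{x,x'}(u_0^{i-1})$ points the wrong way --- which is exactly why the lower-bound step must isolate a distance-achieving pair rather than bound the double sum term by term. A minor bookkeeping point is that $Z_\text{max}(W)$ is a maximum over \emph{distinct} symbols only, so $Z_{a_j,b_j}(W)\le Z_\text{max}(W)$ should be used only on the $d(a,b)$ coordinates with $a_j\ne b_j$, and $Z_{a_j,a_j}(W)=1$ on the remaining ones.
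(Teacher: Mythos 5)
Your proposal is correct, and the upper-bound half coincides with the paper's argument: expand $W^{(i)}_{u_0^{i-1}}$, use $\sqrt{\textstyle\sum_v A_v}\le\sum_v\sqrt{A_v}$, factorize the Bhattacharyya sum over coordinates, and bound each of the $q^{2(\ell-1-i)}$ terms by $Z_\text{max}(W)^{D^{(i)}_{x,x'}(u_0^{i-1})}$, with the same bookkeeping of the $q^{\ell-1-i}$ factor and the same (correctly handled) monotonicity direction coming from $Z_\text{max}(W)\le 1$. The lower-bound half takes a genuinely different, and in fact sharper, route. The paper first rewrites the inner expression as a uniform average over the $q^{2(\ell-1-i)}$ pairs $(v_{i+1}^{\ell-1},w_{i+1}^{\ell-1})$ and applies Jensen's inequality (concavity of the square root) in the form $\bigl(\frac1N\sum A\bigr)^{1/2}\ge\frac1N\sum\sqrt{A}$, and only then retains a distance-minimizing pair; this is where the constant $q^{-2(\ell-1-i)}$ comes from. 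You instead discard all but a distance-minimizing summand \emph{inside each channel} $W^{(i)}_{u_0^{i-1}}(\cdot\mid x)$ and $W^{(i)}_{u_0^{i-1}}(\cdot\mid x')$ before taking square roots, which avoids Jensen entirely and yields the stronger constant $q^{-(\ell-1-i)}$, from which the stated bound follows since $q^{-(\ell-1-i)}\ge q^{-2(\ell-1-i)}$. Both arguments are valid; yours is marginally more elementary and gives a (slightly) tighter lower bound, while the paper's averaging step is what naturally produces the constant as stated in the lemma.
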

\enlargethispage{-10em}
\begin{IEEEproof}
For the second inequality, one has
\begin{align*}
&Z_{x,x'}(W_{ u_0^{i-1}}^{(i)}) = \sum_{y_0^{\ell-1}} \sqrt{W_{u_0^{i-1}}^{(i)}(y_0^{\ell-1}\mid x)W_{u_0^{i-1}}^{(i)}(y_0^{\ell-1}\mid x')}\\
&=q^{i}\sum_{y_0^{\ell-1}} \sqrt{W^{(i)}(y_0^{\ell-1},u_0^{i-1}\mid x)W^{(i)}(y_0^{\ell-1},u_0^{i-1}\mid x')}\\
&=\frac1{q^{\ell-1-i}}\sum_{y_0^{\ell-1}}
\Bigg(\sum_{v_{i+1}^{\ell-1},w_{i+1}^{\ell-1}}\\
& W^{\ell}(y_0^{\ell-1}\mid u_0^{i-1}, x, v_{i+1}^{\ell-1})
W^{\ell}(y_0^{\ell-1}\mid u_0^{i-1}, x', w_{i+1}^{\ell-1})\Bigg)^\frac12\\
&\le \frac1{q^{\ell-1-i}}\sum_{y_0^{\ell-1}} \sum_{v_{i+1}^{\ell-1},w_{i+1}^{\ell-1}}\\
&\quad\sqrt{W^{\ell}(y_0^{\ell-1}\mid u_0^{i-1}, x, v_{i+1}^{\ell-1})W^{\ell}(y_0^{\ell-1}\mid u_0^{i-1}, x', w_{i+1}^{\ell-1})}\\
&\le \frac1{q^{\ell-1-i}}
\sum_{v_{i+1}^{\ell-1},w_{i+1}^{\ell-1}}
Z_\text{max}(W)^{D_{x,x'}^{(i)}(u_0^{i-1})}\\
&= q^{\ell-1-i} Z_\text{max}(W)^{D_{x,x'}^{(i)}(u_0^{i-1})}.
\end{align*}
The first inequality is obtained as follows.
\begin{align*}
&Z_{x,x'}(W_{ u_0^{i-1}}^{(i)}) = 
\sum_{y_0^{\ell-1}} \sqrt{W_{u_0^{i-1}}^{(i)}(y_0^{\ell-1}\mid x)W_{u_0^{i-1}}^{(i)}(y_0^{\ell-1}\mid x')}\\
&=q^{i}\sum_{y_0^{\ell-1}}
\sqrt{W^{(i)}(y_0^{\ell-1},u_0^{i-1}\mid x)W^{(i)}(y_0^{\ell-1},u_0^{i-1}\mid x')}\\
&=\sum_{y_0^{\ell-1}}
\Biggl(\sum_{v_{i+1}^{\ell-1},w_{i+1}^{\ell-1}}
\frac1{q^{2(\ell-1-i)}}\\
&\quad\times W^{\ell}(y_0^{\ell-1}\mid u_0^{i-1}, x, v_{i+1}^{\ell-1})W^{\ell}(y_0^{\ell-1}\mid u_0^{i-1}, x', w_{i+1}^{\ell-1})\Biggr)^{\frac12}\\
&\ge\sum_{y_0^{\ell-1}}
\sum_{v_{i+1}^{\ell-1},w_{i+1}^{\ell-1}} \frac1{q^{2(\ell-1-i)}}\\
&\quad \times\sqrt{W^{\ell}(y_0^{\ell-1}\mid u_0^{i-1}, x, v_{i+1}^{\ell-1})W^{\ell}(y_0^{\ell-1}\mid u_0^{i-1}, x', w_{i+1}^{\ell-1})}\\
&\ge  \frac1{q^{2(\ell-1-i)}} 
Z_\text{min}(W)^{D_{x,x'}^{(i)}(u_0^{i-1})}.
\end{align*}
\end{IEEEproof}

\section*{Acknowledgment}
TT acknowledges support of the Grant-in-Aid for Scientific Research
on Priority Areas (No.~18079010), MEXT, Japan.

\bibliographystyle{IEEEtran}
\bibliography{IEEEabrv,ldpc}

\end{document}